\documentclass[preprint,a4paper,11pt,sort]{elsarticle}
\usepackage{fullpage}

\usepackage{tikz}

\usepackage{amsmath,amssymb,latexsym}
\usepackage{amsthm} % we need this. https://stackoverflow.com/questions/6499504/cleveref-fails-for-theorem-environments-sharing-the-same-counter
\usepackage{thm-restate}

\usepackage{tcolorbox}

\usepackage{xcolor}
\definecolor{darkred}{rgb}{0.8,0,0}

\usepackage{graphicx}

\usepackage{thmtools} % workaround for a cleveref issue in TeX Live 2025: https://www.reddit.com/r/LaTeX/comments/1r979z6/comment/o6au2h3/?utm_source=share&utm_medium=web3x&utm_name=web3xcss&utm_term=1&utm_content=share_button

\usepackage[colorlinks=true]{hyperref}

\usepackage[capitalize,nameinlink]{cleveref}
\newtheorem{theorem}{Theorem}[section]

\newtheorem{lemma}[theorem]{Lemma}
\crefname{lemma}{Lemma}{Lemmas}

\newtheorem{claim}{Claim}[theorem]
\crefname{claim}{Claim}{Claims}

\crefname{corollary}{Corollary}{Corollaries}

\crefname{proposition}{Proposition}{Propositions}

\theoremstyle{definition}
\newtheorem{observation}[theorem]{Observation}
\crefname{observation}{Observation}{Observations}

\newenvironment{subproof}[1][\proofname]{%
  \begin{proof}[#1]%
}{%
  \end{proof}%
}

\newcommand{\STCfull}{\textsc{Spanning Tree Congestion}}
\newcommand{\STC}{\textsc{STC}}
\newcommand{\stc}{\mathrm{stc}} % spanning tree congestion
\newcommand{\cng}{\mathrm{cng}} % congestion
\newcommand{\dist}{\mathrm{dist}} % distance

\usepackage{mathtools}

\begin{document}

\title{Spanning tree congestion of proper interval graphs\tnoteref{t1}}
\tnotetext[t1]{%
  Partially supported by JSPS KAKENHI Grant Numbers
  JP22H00513, % Yota (Ono A)
  JP24H00697, % Yota (Tamaki A)
  JP25K03076, % Yota (Otachi B)
  JP25K03077. % Yota (Hanaka B)
} 

\author{Yota Otachi} %\author[nagoya-u]{Yota Otachi}
\ead{otachi@nagoya-u.jp}

\affiliation{ %\affiliation[nagoya-u]{
    organization={Nagoya University},
    city={Nagoya},
    country={Japan}
}

\begin{abstract}
We show that the spanning tree congestion problem is NP-complete even on proper interval graphs with linear clique-width at most~$4$ and diameter~$3$.
By slightly modifying the reduction, we also show that the problem is NP-complete on (general) graphs of diameter~$2$.

\par % to add line numbers in abstract
\end{abstract}

\begin{keyword}
spanning tree congestion \sep proper interval graph \sep clique-width \sep diameter
\end{keyword}

\maketitle

%%%%%%%%%%%%%%%%%%%%%%%%%%%%%%%%%%%%%%%%%%%%%%%%%%%%%%%%%%%%%%%%%%%%%%%%%%%%%%%%%%%%%%%%%%%%%%%%%%%%%%%%%%%%%%%%%%%%%%%%%%%%%%%%
%%%%%%%%%%%%%%%%%%%%%%%%%%%%%%%%%%%%%%%%%%%%%%%%%%%%%%%%%%%%%%%%%%%%%%%%%%%%%%%%%%%%%%%%%%%%%%%%%%%%%%%%%%%%%%%%%%%%%%%%%%%%%%%%
%%%%%%%%%%%%%%%%%%%%%%%%%%%%%%%%%%%%%%%%%%%%%%%%%%%%%%%%%%%%%%%%%%%%%%%%%%%%%%%%%%%%%%%%%%%%%%%%%%%%%%%%%%%%%%%%%%%%%%%%%%%%%%%%

\section{Introduction}

Let $G$ be a connected graph and $T$ be a spanning tree of $G$.
The \emph{congestion} of an edge $e \in E(T)$ is the size of the fundamental cut of $e$ with respect to $T$,
and the \emph{congestion} of $T$ is the maximum congestion of an edge of $T$.
The \emph{spanning tree congestion} of $G$, denoted $\stc(G)$, is the minimum congestion over all spanning trees of $G$.
The problem studied in this paper, \STCfull, is formalized as follows.
%

%\begin{tcolorbox}
\begin{description}
  \setlength{\itemsep}{0pt}
  \item[Problem.] {\STCfull} ({\STC})
  \item[Input.] A connected graph $G$ and an integer $k$.
  \item[Question.] Is $\stc(G) \le k$?
\end{description}
%\end{tcolorbox}

Our goal in this paper is to prove the following theorem.
\begin{restatable}{theorem}{thmPint}
\label{thm:stc-pint}
{\STCfull} is $\mathrm{NP}$-complete on proper interval graphs of linear clique-width at most~$4$ and diameter~$3$.
\end{restatable}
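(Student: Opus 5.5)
Membership in NP is immediate: a spanning tree is a polynomial-size certificate, and its congestion can be computed in polynomial time by evaluating every fundamental cut. For hardness I will reduce from a strongly NP-hard number problem, namely \textsc{3-Partition}. An instance consists of $3m$ positive integers $a_1,\dots,a_{3m}$ with $\sum a_i = mB$ and $B/4 < a_i < B/2$; the question is whether they split into $m$ triples, each summing to $B$. Strong NP-hardness lets me encode every $a_i$ in unary by cliques of polynomial size.

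The construction is a proper interval graph made of three ``layers'' of cliques along a path of cliques, so that the diameter is $3$.
\begin{itemize}
\item A central clique $C$ has very large size $N$.
\item For each item $i$, a clique $A_i$ of size proportional to $a_i$ (scaled by a large factor) sits on one side of $C$.
\item For each bin $j$, a clique $D_j$ sits on the other side of $C$.
\end{itemize}
The adjacencies will be chosen so that the interval model stays proper, for instance consecutive intervals overlapping $C$. The graph should be buildable with a linear clique-width expression using $4$ labels: roughly, one label for finished vertices, one for $C$, and two for the current group being attached. The threshold $k$ is set so that a spanning tree is forced to look like a star-like structure rooted inside $C$. Every item clique $A_i$ must then hang below some vertex, in such a way that the congestion of the tree edge above a bin-representative vertex counts the total weight of the item cliques assigned to it. In this way each bin edge carries congestion $\approx$ (base) $+$ (scale)$\cdot\sum_{i\in \text{bin}} a_i$, which is at most $k$ exactly when every bin sum is at most $B$. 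Because $\sum a_i = mB$, every bin sum is then exactly $B$.

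For the forward direction, given a 3-partition, I will construct the tree explicitly and bound each fundamental cut by direct counting. The converse is the main obstacle. I first need structural lemmas showing that any spanning tree with congestion at most $k$ has its expected shape:
\begin{itemize}
\item A large clique cannot be split across a high-congestion edge, using the standard bound that a tree edge separating a $K_n$ into parts of sizes $s$ and $n-s$ has congestion at least $s(n-s)$.
\item Consequently $C$ lies essentially in one subtree.
\item Each $A_i$ must be attached as a whole through a single bin vertex rather than being spread out.
\end{itemize}
I plan to choose clique sizes with large gaps ($N \gg$ scale $\gg mB$) so that any deviation from the intended shape costs more than the slack in $k$. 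Once the shape is forced, reading off which bin each $A_i$ hangs under gives the partition, and the condition $B/4<a_i<B/2$ ensures each bin receives exactly three items. Finally I will check separately that the diameter is $3$ and that the linear clique-width expression uses at most $4$ labels.
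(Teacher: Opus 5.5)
Your outline follows the paper's strategy: reduce from \textsc{3-Partition}, force a star on a large dense set because a low-congestion tree cannot split it non-trivially, read the partition off the subtrees below the bin edges, and use $\sum_i a_i=mB$ to turn ``at most $B$'' into ``exactly $B$''. But you never define the graph, and the concrete features you do commit to would fail. That gadget is where the entire difficulty of the theorem lies.

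(i) In a proper interval ordering, the closed neighborhood of a vertex is a contiguous segment and is the union of two cliques. Hence a bin vertex cannot be adjacent to three pairwise non-adjacent vertices, so it cannot see many separate item cliques. Worse, with items on one side of $C$ and bins on the other, any edge from the item side to the bin side has both endpoints adjacent to all of $C$. Hanging an item below a bin vertex therefore puts about $2N$ edges across the bin edge, whereas your $s(n-s)$ argument keeps $C$ together only if $k<2(N-2)$.

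(ii) Congestion counts cut edges, not vertices, so the size of $A_i$ matters only through its external adjacencies, which you never specify. By (i) the item gadgets must be largely mutually adjacent. Part of those adjacencies then go to other items, and whether they cross a bin edge depends on which items share the bin. This destroys the intended form ``base $+\ \text{scale}\cdot\sum a_i$''.

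(iii) Your bullet that each $A_i$ is attached as a whole does not follow from the $s(n-s)$ bound. That bound excludes non-trivial splits only of cliques with more than about $k/2+2$ vertices. Your item cliques are meant to be far smaller than $C$, since three of them plus a base must fit under one bin edge. So nothing you state prevents an item from being shared among several bins. Indivisibility is exactly what a \textsc{3-Partition} reduction needs, because fractional assignments always exist.

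The four-label linear clique-width expression and the diameter bound are likewise asserted for a graph that has not been defined.

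The missing idea is to make items single vertices and encode weights in degrees. The paper takes cliques $X=\{x_1,\dots,x_{3m}\}$, $Y=\{y_1,\dots,y_{a_{3m}}\}$ and $Z$, with $k=3B$ and $|Y\cup Z|=k+1$. It makes $x_i$ adjacent exactly to the prefix $y_1,\dots,y_{a_i}$, where $a_1\le\dots\le a_{3m}$ and $a_i\ge 8m$. So every item sees every bin vertex $y_1,\dots,y_m$, and indivisibility is automatic.
\begin{itemize}
\item The bins $y_i$, $i\in[m]$, share a prefix of $Z$ chosen so that $\deg_G(y_i)=k-B-9m+15$. A bin edge then carries exactly $k$ when a weight-$B$ triple hangs below it.
\item Each filler $y_i$ with $i>m$ gets exactly $|Z|-\gamma_i$ neighbors in a prefix of $Z$, where $\gamma_i$ is its number of neighbors in $X$. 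Its degree is then exactly $k$, which forces it to be a leaf.
\item The prefixes are nested, so $(X,Y,Z)$ is a proper interval ordering.
\item The dense set in the spider lemma is $Y\cup Z$ itself. The center is forced into $Z$ because $z_{|Z|}$ has all its neighbors in $Z$ and degree below $k$.
\item Linear clique-width at most $4$ follows from the bound ``independence number plus one'' for proper interval graphs, since $V$ is covered by three cliques.
\end{itemize}
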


By slightly modifying the reduction used to prove \cref{thm:stc-pint},
we also show that {\STCfull} is $\mathrm{NP}$-complete on graphs of diameter~$2$.
\begin{restatable}{theorem}{thmDiamTwo}
\label{thm:diam2}
\STCfull{} is $\mathrm{NP}$-complete on graphs of diameter~$2$.
\end{restatable}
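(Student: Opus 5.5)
The statement to be proved right after its display is \cref{thm:diam2}. My plan is to modify the reduction behind \cref{thm:stc-pint} by adding a universal-like structure that shrinks the diameter to~$2$ while shifting the congestion threshold by a controllable amount. Membership in NP is immediate: a spanning tree is a certificate, and all fundamental cuts can be computed in polynomial time. For hardness I take the instance $(G,k)$ produced by the reduction for \cref{thm:stc-pint}, where $G$ is a proper interval graph of diameter~$3$. I build $G'$ by adding a new vertex $u$ adjacent to every vertex of $G$, so that $G'$ has diameter at most~$2$. Since a single universal vertex would dominate the congestion and destroy the structure, I instead consider attaching a large clique $K$ of size $M$ whose vertices are all adjacent to all of $V(G)$. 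A simpler alternative is a single apex $u$ with threshold $k' = k + c$, where $c$ is the number of $u$-edges that can cross a fundamental cut.

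The key steps are as follows. First, describe canonical trees: from a spanning tree $T$ of $G$ with congestion at most $k$, build $T'$ by attaching $u$ (or $K$) as a pendant hub through a single edge $uv_0$. Then the fundamental cut of an edge $e$ of $T$ in $T'$ picks up exactly the edges from $u$ to the side of $e$ not containing $v_0$. This count may be large, so I expect to need the specific structure of the earlier reduction: its tree presumably has a bounded-size side for every heavily loaded edge, which keeps the extra contribution below a fixed amount $c$. Second, conversely, from a spanning tree $T'$ of $G'$ of congestion at most $k'$, I will show that $u$ (or the clique) must be essentially a leaf-like hub. If $T'$ uses many $u$-edges, then the edge $uv$ separating a large subtree has congestion roughly the number of $G$-edges leaving that subtree plus the $u$-edges to it, which exceeds $k'$. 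After this normalization, deleting $u$ and reconnecting yields a spanning tree of $G$ of congestion at most $k$.

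The main obstacle is the second direction, because the apex adds degree-$n$ edges to every cut and can make balanced cuts expensive. To handle it, I will first try to choose the padding so that in the intended solution every tree edge separates a part whose $u$-degree is known exactly. For instance, I can make each gadget side have a fixed number of vertices by padding with pendant cliques, so the added congestion is uniform. Then the threshold $k'$ is exact in both directions. If uniformity fails, I will fall back to replacing $u$ by a clique $K$ with $M \gg k$. Any tree edge inside $K\cup V(G)$ whose cut splits $K$ then has congestion at least about $M$, which forces $K$ to hang off a single vertex through its spanning star. Only the edges from $V(G)$ to $K$ contribute, and these can be accounted for by making $M$ a multiple that rescales all cuts uniformly.
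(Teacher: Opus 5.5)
There is a genuine gap. The two constructions you actually propose both produce only yes-instances, and the proposal never settles on one that works.

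\textbf{Single apex.} Take $u$ adjacent to all of $V(G)$. The star centered at $u$ is a spanning tree in which each edge $\{u,v\}$ has congestion $\deg_G(v)+1$. Since the graph of \cref{thm:stc-pint} has $\Delta(G)=k$ (attained by $z_1$ and by $y_i$ for $i>m$), this gives $\stc(G+u)\le k+1$ for \emph{every} instance. Your forward direction, however, forces $k'\ge k+1$. In the tree of \cref{lem:3partition->stc}, the edge $\{z_1,y_i\}$ with $i\in[m]$ already has congestion exactly $k$ in $G$. It gains at least one $u$-edge wherever $u$ is attached: four if $u$ is on the $z_1$ side, and $|V(G)|-4$ if $u$ hangs below $y_i$. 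Making $u$ pendant is worse still, costing $\deg(u)=|V(G)|=k+3m+1$. So $c\ge 1$, and the reduction is trivial.

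\textbf{Joined clique.} Replacing $u$ by a clique $K_M$ joined to $V(G)$ does not help. The star centered at a vertex of $K_M$ has congestion $M+|V(G)|-1=M+k+3m$. The edge $\{z_1,y_i\}$ of the intended tree has congestion at least $k+4M$ however $K_M$ is attached, and hanging $K_M$ off one vertex makes that connecting edge carry $M|V(G)|$ edges. So for $M\ge m$ the instance is again always yes. Moreover, a side $A\subseteq V(G)$ gains $M|A|$ edges, which depends on $|A|$ and not on its cut in $G$, so nothing is ``rescaled uniformly.'' In general, any vertex adjacent to everything makes the star at that vertex a near-optimal tree and erases the gadget structure. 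Your reverse-direction ``normalization'' is never made precise for the same reason.

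\textbf{The missing idea.} You do not need a universal vertex. In $G$ the only pairs at distance $3$ lie in $X\times Z$, so the paper adds one vertex $p$ adjacent to $X\cup Z$ but \emph{not} to $Y$; $p$ reaches $Y$ through $z_1$.
\begin{itemize}
\item Non-adjacency to $Y$ keeps $p$ from being a cheap hub. Applying \cref{lem:spider} to $S=\{p\}\cup Y\cup Z$ with $k'=k+3$, $\rho_1=(k+1)/(k+3)$ and $\rho_2=(k+3)/(2(k+1))$, the branching vertex cannot be $p$. Each $p$--$y$ leg would need its own vertex of $X$, but $|Y|>|X|$.
\item Adjacency to all of $X$ makes the extra load uniform. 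Each part $\{y_i\}\cup X_i$ with $|X_i|=3$ gains exactly $3$ edges, which gives the threshold $k+3$.
\item A set $D$ of $a_{3m}-3m+2$ degree-$2$ vertices adjacent to $p$ and $z_1$ pads $\deg(p)$ to exactly $k+3$. This preserves diameter~$2$ and lets $p$ be a leaf at $z_1$ in the intended tree. It also makes any subtree below $p$ that contains vertices of $X$ too expensive.
\end{itemize}
With these properties, the remaining argument of \cref{lem:stc->3partition} carries over essentially verbatim.
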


Note that \cref{thm:diam2} is not about proper interval graphs.
Indeed, every proper interval graph of diameter at most~$2$ is a cochain graph (see \cref{sec:cochain}),
for which a polynomial-time algorithm is known~\cite{KuboYKY14la,Kubo15mt}.

\subsection{Related results}
Since Ostrovskii~\cite{Ostrovskii04} formally introduced the concept of spanning tree congestion,
it has been studied extensively from both graph-theoretic and algorithmic perspectives.
Here we summarize relevant known results on the complexity of {\STC} on restricted graph classes.
For more information about spanning tree congestion, see the survey by Otachi~\cite{Otachi20}.

In general, {\STC} is known to be NP-complete~\cite[Section~5.6]{Lowenstein10}, and its NP-completeness holds 
even on planar graphs~\cite{BodlaenderFGOL12}, split graphs~\cite{OkamotoOUU11}, chain graphs ($2K_{2}$-free bipartite graphs)~\cite{OkamotoOUU11},
and graphs of maximum degree~$3$~\cite{AtaligCDKLSZ26}.
We recently noticed that a proof by Lampis et al.~\cite{LampisMNOVV25} can be easily modified to prove the NP-completeness of {\STC} on interval graphs of linear clique-width at most~$3$
(see the updated full version of their paper~\cite[Theorem~9]{LampisMNOVV25arxiv}).
The main result of this paper (\cref{thm:stc-pint}) strengthens this hardness to proper interval graphs.

On the other hand, it is known that {\STC} is polynomial-time solvable for trivially perfect graphs (chordal cographs)~\cite{OkamotoOUU11},
cochain graphs (cobipartite chordal graphs)~\cite{KuboYKY14la,Kubo15mt}, 
and outerplanar graphs~\cite{BodlaenderKMO11}.
Furthermore, the XP algorithm parameterized by treewidth~\cite[Theorem~14]{LampisMNOVV25}, 
together with the fact that every $p$-outerplanar graph has treewidth at most $3p-1$~\cite{Bodlaender98}, 
implies that {\STC} is polynomial-time solvable on $p$-outerplanar graphs for every fixed $p$.

See \cref{fig:graph-classes} for a summary of the results mentioned above.

\begin{figure}[tb]
  \centering
  % -*- coding:utf-8 -*-
%#BIBTEX bibtex main
%#!pdflatex -synctex=1 main

%\definecolor{tkzdarkred}{rgb}{.5,.1,.1}
%\definecolor{tkzdarkorange}{rgb}{.7,.25,0}
%\definecolor{tkzdarkblue}{rgb}{.1,.1,.5}
%\definecolor{tkzdarkgreen}{rgb}{.1,.35,.15}

%\tikzset{npc/.style = {draw,semithick,rectangle,tkzdarkred,fill=tkzdarkred!10,align=center}}
%\tikzset{ply/.style = {draw,semithick,rectangle,rounded corners,tkzdarkgreen,fill=tkzdarkgreen!12,align=center}}
\tikzset{npc/.style = {draw,semithick,rectangle,fill=red!20,align=center}}
\tikzset{ply/.style = {draw,semithick,rectangle,rounded corners,fill=green!20,align=center}}
\tikzset{unk/.style = {draw,semithick,rectangle,dotted,black,align=center,fill=gray!20}}

\begin{tikzpicture}[semithick, every node/.style={font=\small},scale=1.0]
  \def\colp{-3.2}
  \def\colc{5}
  \def\row{1.7}

  \node[npc] (gen) at (0, 5*\row) {general~\cite{Lowenstein10}};

  \node[npc] (deg8) at (-6, 4*\row) {$\Delta \le 8$~\cite{LampisMNOVV25}};
  \draw (gen) -- (deg8);

  \node[npc] (deg3) at (-6, 3*\row) {$\Delta \le 3$~\cite{AtaligCDKLSZ26}};
  \draw (deg8) -- (deg3);

  \node[npc] (pln) at (\colp, 4*\row) {planar~\cite{BodlaenderFGOL12}};
  \draw (gen) -- (pln);

  \node[ply] (pop) at (\colp, 3*\row) {$p$-outerplanar~\cite{Bodlaender98,LampisMNOVV25} \\ (XP w.r.t.~$p$) };
  \draw (pln) -- (pop);

  \node[ply] (opl) at (\colp, 2*\row) {outerplanar~\cite{BodlaenderKMO11}};
  \draw (pop) -- (opl);

  \node[npc] (chd) at (0, 4*\row) {chordal};
  \draw (gen) -- (chd);

  \node[npc] (int) at (0, 3*\row) {interval~\cite{LampisMNOVV25arxiv}};
  \draw (chd) -- (int);

  \node[npc, ultra thick] (pit) at (0, 2*\row) {proper interval \\ (\cref{thm:stc-pint})};
  \draw (int) -- (pit);

  \node[ply] (cch) at (0, 1*\row) {cochain~\cite{KuboYKY14la,Kubo15mt}};
  \draw (pit) -- (cch);

  \node[npc] (spl) at (1.9, 3.5*\row) {split~\cite{OkamotoOUU11}};
  \draw (chd) [out=0,in=120] to (spl);

  \node[npc] (cw3) at (\colc, 4*\row) {$\textrm{clique-width} \le 3$};
  \draw (gen) -- (cw3);

  \node[unk] (cog) at (\colc, 3*\row) {cograph \\ ($\textrm{clique-width} \le 2$)};
  \draw (cw3) -- (cog);

  \node[ply] (trp) at (\colc, 2*\row) {trivially perfect~\cite{OkamotoOUU11}};
  \draw (cog) -- (trp);
  \draw (int) -- (trp);

  \node[ply] (thd) at (\colc, 1*\row) {threshold};
  \draw (trp) -- (thd);
  \draw (spl) [out=280,in=170] to (thd);

  \node[npc] (chn) at (8, 1*\row) {chain~\cite{OkamotoOUU11}};
  \draw (cw3) [out=345,in=90] to (chn);

  \draw (cch) [out=0,in=195] to (cw3);
\end{tikzpicture}
  \caption{The complexity of {\STC} on graph classes.
    The (red) solid rectangles represent NP-complete cases and 
    the (green) rounded rectangles represent polynomial-time solvable cases.}
  \label{fig:graph-classes}
\end{figure}

\subsection{On a recently published algorithm for interval graphs}
\label{sec:lin-lin}

In 2025, Lin and Lin~\cite{LinL25} presented a polynomial-time algorithm for {\STC} on interval graphs.
However, as mentioned above, the updated full version of Lampis et al.~\cite[Theorem~9]{LampisMNOVV25arxiv} proves that the problem is NP-complete on interval graphs.
Unless $\mathrm{P} = \mathrm{NP}$, the claimed algorithm and the hardness result are incompatible.
After examining the paper of Lin and Lin, we found gaps in the description that prevent us from verifying the algorithm in its current form.
A subsequent private communication with the authors indicated that, in light of the NP-hardness result, they now believe their algorithm to be incorrect~\cite{LinL25p}.

%%%%%%%%%%%%%%%%%%%%%%%%%%%%%%%%%%%%%%%%%%%%%%%%%%%%%%%%%%%%%%%%%%%%%%%%%%%%%%%%%%%%%%%%%%%%%%%%%%%%%%%%%%%%%%%%%%%%%%%%%%%%%%%%
%%%%%%%%%%%%%%%%%%%%%%%%%%%%%%%%%%%%%%%%%%%%%%%%%%%%%%%%%%%%%%%%%%%%%%%%%%%%%%%%%%%%%%%%%%%%%%%%%%%%%%%%%%%%%%%%%%%%%%%%%%%%%%%%
%%%%%%%%%%%%%%%%%%%%%%%%%%%%%%%%%%%%%%%%%%%%%%%%%%%%%%%%%%%%%%%%%%%%%%%%%%%%%%%%%%%%%%%%%%%%%%%%%%%%%%%%%%%%%%%%%%%%%%%%%%%%%%%%

\section{Preliminaries}

For integers $m$ and $n$, let $[m,n]$ be the set of integers from $m$ to $n$,
and let $[n]$ be the set of positive integers less than or equal to $n$; that is,
$[m, n] = \{d \in \mathbb{Z} \mid m \le d \le n\}$
and $[n] = [1,n]$.

Let $G = (V,E)$ be a graph. 
We sometimes refer to the vertex set and the edge set of $G$ as $V(G)$ and $E(G)$, respectively.
For $S \subseteq V$, let $G[S]$ denote the subgraph of $G$ induced by $S$.
If $G[S]$ has all possible edges, then $S$ is a \emph{clique}.
For $v \in V$, we denote by $N_{G}(v)$ and $N_{G}[v]$ the (\emph{open}) \emph{neighborhood} and the \emph{closed neighborhood} of $v$ in $G$, respectively;
i.e., $N_{G}(v) = \{u \mid \{u,v\} \in E\}$ and $N_{G}[v] = N_{G}(v) \cup \{v\}$.
The degree of $v$ in $G$, i.e., $|N_{G}(v)|$, is denoted by $\deg_{G}(v)$.
We denote the minimum degree of $G$ by $\delta(G)$ and the maximum degree by $\Delta(G)$.
For two disjoint vertex subsets $A, B \subseteq V$, 
let $E(A,B)$ be the set of edges between $A$ and $B$; that is, $E(A, B) = \{\{a,b\} \in E \mid a \in A, \; b \in B\}$.

Now let $G = (V,E)$ be a connected graph and $T$ be a spanning tree of $G$.
For $e \in E(T)$, we define its \emph{congestion}, denoted $\cng_{G,T}(e)$,
as the number of edges between the two connected components $T_{1}$ and $T_{2}$ of $T-e$;
i.e., $\cng_{G,T}(e) = |E(V(T_{1}), V(T_{2}))|$.
We define the \emph{congestion} of $T$ as $\cng_{G}(T) = \max_{e \in E(T)} \cng_{G,T}(e)$.
The \emph{spanning tree congestion} of $G$, denoted $\stc(G)$, is the minimum congestion over all spanning trees of $G$.

Let $T$ be a tree. A vertex $v \in V(T)$ is a \emph{leaf} of $T$ if $\deg_{T}(v) \le 1$.
If $\deg_{T}(v) \ge 3$, then we call $v$ a \emph{branching vertex}.
A tree is a \emph{spider} if it has at most one branching vertex.
A tree is a \emph{star} if there is a vertex $c$, called the \emph{center}, adjacent to all other vertices;
i.e., a star is the complete bipartite graph $K_{1, s}$ for some $s \ge 0$.

An \emph{interval representation} of a graph $G = (V,E)$ is a set of closed intervals $\{I_{v} \mid v \in V\}$ on a line
such that $\{u,v\} \in E$ if and only if $I_{u} \cap I_{v} \ne \emptyset$.
An interval representation is \emph{proper} if no interval properly contains another.
A graph is an \emph{interval graph} if it has an interval representation,
and it is a \emph{proper interval graph} if it has a proper interval representation.
It is known that a graph is a proper interval graph if and only if it is a claw-free ($K_{1,3}$-free) interval graph~\cite{Roberts69}.

Although we use the concept of (linear) clique-width, we omit the definition since we do not need it in the proof.
See~\cite{FellowsRRS09} for its definition.

%%%%%%%%%%%%%%%%%%%%%%%%%%%%%%%%%%%%%%%%%%%%%%%%%%%%%%%%%%%%%%%%%%%%%%%%%%%%%%%%%%%%%%%%%%%%%%%%%%%%%%%%%%%%%%%%%%%%%%%%%%%%%%%%
%%%%%%%%%%%%%%%%%%%%%%%%%%%%%%%%%%%%%%%%%%%%%%%%%%%%%%%%%%%%%%%%%%%%%%%%%%%%%%%%%%%%%%%%%%%%%%%%%%%%%%%%%%%%%%%%%%%%%%%%%%%%%%%%
%%%%%%%%%%%%%%%%%%%%%%%%%%%%%%%%%%%%%%%%%%%%%%%%%%%%%%%%%%%%%%%%%%%%%%%%%%%%%%%%%%%%%%%%%%%%%%%%%%%%%%%%%%%%%%%%%%%%%%%%%%%%%%%%

\section{An auxiliary lemma}

In this section, we present an auxiliary technical lemma, which is a key tool in our proofs
and generalizes the fact that, for complete graphs, the minimum-congestion spanning trees are stars.
Intuitively, the lemma says that a spanning tree $T$ of low congestion cannot ``non-trivially split'' a highly connected vertex subset $S$ of large size,
and thus $T$ has a star-like (or spider-like) structure over $S$.
This implies that $T$ gives us an \emph{assignment} of the vertices not in $S$ to elements of $S$;
i.e., we assign $v \notin S$ to $s \in S$ if $v$ and $s$ belong to the same \emph{leg} in the star-like structure of $T$.
In \cref{sec:main-proof,sec:diam2}, we use this lemma in a reduction from \textsc{3-Partition},
with $\rho_{1} = 1$ and $\rho_{2} = 1/2$ in \cref{sec:main-proof} and with $\rho_{1} = (k+1)/(k+3)$ and $\rho_{2} = (k+3)/(2(k+1))$ in \cref{sec:diam2}.
We hope the lemma will be useful for other reductions or even for some algorithms on dense graphs.

Let $T$ be a spanning tree of $G$.
We say that an edge $e \in E(T)$ \emph{splits} $S \subseteq V$ into $S_{1}$ and $S_{2}$
if the two connected components $T_{1}$ and $T_{2}$ of $T-e$ satisfy $\{V(T_{1}) \cap S, V(T_{2}) \cap S\} = \{S_{1}, S_{2}\}$.
We say that $e$ \emph{non-trivially splits} $S$ if $e$ splits $S$ into $S_{1}$ and $S_{2}$ with $\min\{|S_{1}|, |S_{2}|\} \ge 2$.

\begin{lemma}
\label{lem:spider}
Let $G = (V,E)$ be a graph and $T$ a spanning tree of $G$ with congestion at most~$k$.
Let $\rho_{1}, \rho_{2} \in \mathbb{R}$ be positive numbers such that $\rho_{1} \rho_{2} \ge 1/2$ and $\rho_{2} \ge 1/2$.
Let $S \subseteq V$ be a set of at least four vertices such that $|S| \ge \rho_{1} k + 1$ and $\delta(G[S]) \ge \rho_{2} |S| +1$.
If $T'$ is the minimal subtree of $T$ containing all vertices of $S$,
then $T'$ is a spider with a branching vertex such that 
\begin{itemize}
  \setlength{\itemsep}{0pt}
  \item all degree-$2$ vertices of $T'$ (if any exists) belong to $V \setminus S$, and
  \item the degree of the branching vertex of $T'$ is at least $|S| - 1$.
\end{itemize}
\end{lemma}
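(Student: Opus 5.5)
The plan is to show first that no edge of $T$ non-trivially splits $S$, and then to read off the spider structure from this property.

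\textbf{Step 1 (no non-trivial split).} Suppose an edge $e \in E(T)$ splits $S$ into $S_1, S_2$ with $a = |S_1| \le |S_2| = b$, $a \ge 2$, and $a + b = |S|$. Each vertex of $S_1$ has at least $\rho_2|S| + 1 - (a-1)$ neighbors in $S_2$. Each vertex of $S_2$ has at least $\rho_2|S| + 1 - (b-1)$ neighbors in $S_1$. Hence
\[
|E(S_1,S_2)| \ge \max\{a(\rho_2|S| + 2 - a),\; b(\rho_2|S| + 2 - b)\},
\]
and every such edge crosses the cut of $e$, so $\cng_{G,T}(e)$ is at least this quantity. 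We want to show the quantity exceeds $k$. Since $k \le (|S|-1)/\rho_1$, it suffices to show it exceeds $2\rho_2(|S|-1)$, using $1/\rho_1 \le 2\rho_2$.

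The function $a \mapsto a(\rho_2|S| + 2 - a)$ is concave, so on the range $2 \le a \le |S|/2$ its minimum is attained at an endpoint.
\begin{itemize}
\item At $a = 2$ we get $2\rho_2|S| > 2\rho_2(|S|-1)$.
\item At $a = |S|/2$ we get $(|S|/2)(\rho_2|S| + 2 - |S|/2)$. With $\rho_2 \ge 1/2$ this is at least $|S|$, and it should also dominate $2\rho_2(|S|-1)$.
\end{itemize}
This endpoint arithmetic is the main obstacle. If the bound from $S_1$ alone is too weak for large $\rho_2$, I would also use the $S_2$-side bound $b(\rho_2|S|+2-b)$. Alternatively, I would sum both sides and take the average, namely $\tfrac12[a(\rho_2|S|+2-a) + b(\rho_2|S|+2-b)]$. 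Combining these cases should give the strict inequality.

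\textbf{Step 2 (structure of $T'$).} Every leaf of $T'$ lies in $S$ by minimality. Take any edge $e$ of $T'$. It splits $S$, and by Step 1 one side contains exactly one vertex of $S$ (neither side is empty, because $e$ lies on a path between two vertices of $S$).

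First, $T'$ has a branching vertex. Otherwise $T'$ is a path, and an edge in the middle would split $S$ into two parts of size at least $2$, using $|S| \ge 4$.

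Second, there is at most one branching vertex. If $u \ne v$ were both branching, consider an edge $e$ on the $u$–$v$ path. The $u$-side of $e$ contains at least two other branches at $u$, each of which ends in a leaf belonging to $S$. The $v$-side similarly contains at least two leaves in $S$. So $e$ would split $S$ non-trivially.

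\textbf{Step 3 (the two bullet conditions).} Let $c$ be the branching vertex. Each leg of the spider is a path from $c$ to a leaf. Suppose a leg contained two vertices of $S$, counting its leaf. Cutting the leg edge at $c$ would put at least $2$ vertices of $S$ on the leg side. The other side contains at least two leaves, since $c$ has degree at least $3$. This is a non-trivial split, a contradiction. Hence each leg contains exactly one vertex of $S$, namely its leaf, and every degree-$2$ vertex of $T'$ (an internal leg vertex) lies in $V \setminus S$.

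It follows that $S \subseteq \{c\} \cup \{\text{leaves}\}$, so $|S| \le 1 + \deg_{T'}(c)$. This gives $\deg_{T'}(c) \ge |S| - 1$.
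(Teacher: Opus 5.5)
Your overall plan is the same as the paper's: first show that no edge non-trivially splits $S$, then read off the spider. Your Steps 2 and 3 are correct and essentially identical to the paper's argument.

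The gap is in Step 1, at exactly the point you flag. You never verify the endpoint $a=|S|/2$. The bound you do state there, that the value is at least $|S|$, is too weak once $\rho_2>1/2$: for $\rho_2=3/4$ and $|S|>3$ one has $|S|<2\rho_2(|S|-1)$. Writing that it ``should also dominate'' is not an argument. This inequality is the only quantitative content of the lemma and the only place $\rho_2\ge 1/2$ is used, so the proof is incomplete as written.

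The missing step is one line. With $f(a)=a(\rho_2|S|+2-a)$ you have $f(x)-f(y)=(x-y)(\rho_2|S|+2-x-y)$, so
\[
f(|S|/2)-f(2)=\Bigl(\tfrac{|S|}{2}-2\Bigr)\Bigl(\rho_2-\tfrac12\Bigr)|S|\ \ge\ 0,
\]
using $|S|\ge 4$ and $\rho_2\ge 1/2$.

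The paper phrases this equivalently. Since $f$ is symmetric about $(\rho_2|S|+2)/2$, we have $f(2)=f(\rho_2|S|)$, and $[2,|S|/2]\subseteq[2,\rho_2|S|]$. Hence the minimum of $f$ on your range is $f(2)=2\rho_2|S|$. This gives
\[
\mathrm{cng}_{G,T}(e)\ge 2\rho_2|S|\ge 2\rho_1\rho_2k+2\rho_2\ge k+1.
\]

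Your proposed fallbacks are not needed and would not help. The $S_2$-side bound $b(\rho_2|S|+2-b)$ can be negative, for example with $\rho_2=1/2$, $b=|S|-2$ and $|S|>8$. In any non-vacuous instance it never exceeds the $S_1$-side bound, so taking the maximum or averaging gains nothing. The $S_1$-side bound alone already suffices.
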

\begin{proof}
We first show that no edge in $T$ non-trivially splits $S$.
Suppose to the contrary that $e \in E(T)$ non-trivially splits $S$ into $S_{1}$ and $S_{2}$.
Without loss of generality, we assume that $|S_{1}| \le |S_{2}|$, and thus $2 \le |S_{1}| \le |S|/2$.
Now we have
\begin{align*}
  \cng_{G,T}(e)
  &\ge
  |E(S_{1}, S_{2})|
  =
  \sum_{v \in S_{1}} (\deg_{G[S]}(v) - \deg_{G[S_{1}]}(v))
  \\
  &\ge |S_{1}| ((\rho_{2} |S| +1) - (|S_{1}|-1))
  =
  |S_{1}| ((\rho_{2} |S| +2) - |S_{1}|).
\end{align*}
Observe that the minimum value of the concave function $f(s_{1}) = s_{1} ((\rho_{2} |S| + 2) - s_{1})$ for the range $2 \le s_{1} \le |S|/2$ ($\le \rho_{2} |S|$)
is $f(2) = 2 \rho_{2} |S|$.
Thus we have
\[
  \cng_{G,T}(e) \ge |S_{1}| ((\rho_{2} |S|+2) - |S_{1}|) \ge 2\rho_{2} |S| \ge 2\rho_{2}  (\rho_{1} k + 1) = 2\rho_{2}\rho_{1} k + 2\rho_{2} \ge k + 1,
\]
which contradicts the assumption that $T$ has congestion at most $k$.

Now we can see that $T'$ is a spider as follows.
By its minimality, every leaf of $T'$ belongs to $S$.
Hence, if $T'$ contains two branching vertices,
then each edge in the path connecting these branching vertices non-trivially splits $S$.
Thus, $T'$ has at most one branching vertex.
If $T'$ is a path, then some edge in $T'$ non-trivially splits $S$ as $|S| \ge 4$.
Thus we can conclude that $T'$ contains exactly one branching vertex;
that is, $T'$ is a spider with at least three leaves, where all leaves belong to $S$.

Finally, we show that no degree-$2$ vertex of $T'$ belongs to $S$.
We consider the unique branching vertex of $T'$ as its root and denote it by $r$.
Observe that if a degree-$2$ vertex $v$ of $T'$ belongs to $S$,
then each edge in the $v$--$r$ path in $T'$ non-trivially splits $S$;
one set contains $v$ and its leaf descendant and the other set contains all other leaves.
This implies that each degree-$2$ vertex of $T'$ (if any exists) belongs to $V \setminus S$.
Now, since each vertex of $S$ is either a leaf or the root of $T'$, the degree of $r$ is at least $|S|-1$.
\end{proof}

%%%%%%%%%%%%%%%%%%%%%%%%%%%%%%%%%%%%%%%%%%%%%%%%%%%%%%%%%%%%%%%%%%%%%%%%%%%%%%%%%%%%%%%%%%%%%%%%%%%%%%%%%%%%%%%%%%%%%%%%%%%%%%%%
%%%%%%%%%%%%%%%%%%%%%%%%%%%%%%%%%%%%%%%%%%%%%%%%%%%%%%%%%%%%%%%%%%%%%%%%%%%%%%%%%%%%%%%%%%%%%%%%%%%%%%%%%%%%%%%%%%%%%%%%%%%%%%%%
%%%%%%%%%%%%%%%%%%%%%%%%%%%%%%%%%%%%%%%%%%%%%%%%%%%%%%%%%%%%%%%%%%%%%%%%%%%%%%%%%%%%%%%%%%%%%%%%%%%%%%%%%%%%%%%%%%%%%%%%%%%%%%%%

\section{Proper interval graphs}
\label{sec:main-proof}

In this section, we prove our main theorem, which is restated below.
%%%%%%%%%
\thmPint*
%%%%%%%%%

Since {\STC} clearly belongs to NP, we only need to prove its NP-hardness on the stated class of graphs.
To this end, we present a reduction from \textsc{3-Partition}.
The input of \textsc{3-Partition} consists of $3m$ positive integers $a_{1}, \dots, a_{3m}$ and another integer $B$
such that $\sum_{i \in [3m]} a_{i} = m B$ and $B/4 < a_{i} < B/2$ for each $i \in [3m]$.
It asks whether there exists a partition $(A_{1}, \dots, A_{m})$ of $[3m]$ such that $\sum_{j \in A_{i}} a_{j} = B$ for $i \in [m]$.
Note that the assumption $B/4 < a_{i} < B/2$ for $i \in [3m]$ implies that $|A_{i}| = 3$ whenever $\sum_{j \in A_{i}} a_{j} = B$.
It is known that \textsc{3-Partition} is strongly NP-complete; 
that is, it is NP-complete even if all integers $a_{i}$ are upper-bounded by some polynomial in $m$~\cite[SP15]{GareyJ79}.

\subsection{Construction}
Let $\mathcal{I} = \langle a_{1}, \dots, a_{3m}; B \rangle$ be an instance of \textsc{3-Partition} with $B/4 < a_{i} < B/2$ for each $i \in [3m]$. 
We may assume that $m \ge 4$, since instances with $m < 4$ can be solved in polynomial time.
By multiplying all numbers by $8m$ if necessary, we may also assume that $a_{i} \ge 8m$ for each $i \in [3m]$ (and thus $B = \frac{1}{m} \sum_{i \in [3m]} a_{i} \ge 24m$).
We assume without loss of generality that $a_{1} \le a_{2} \le \dots \le a_{3m}$.
For $i \in [a_{3m}]$, let $\Gamma_{i} = \{j \in [3m] \mid a_{j} \ge i\}$ and $\gamma_{i} = |\Gamma_{i}|$.
Since $a_{1} \le \dots \le a_{3m}$, it follows that $\Gamma_{i} = [3m - \gamma_{i} + 1, 3m]$.
By definition, we have $\gamma_{i} \ge \gamma_{i'}$ for $i < i'$.
Note that for $i \in [m]$, $\gamma_{i} = 3m$ holds since $a_{j} \ge 8m \ge m$ for all $j \in [3m]$.

To construct an instance $\langle G, k \rangle$ of {\STCfull},
we set $k = 3B$ and construct $G = (V, E)$ with $V = X \cup Y \cup Z$ as follows.
\begin{itemize}
  \setlength{\itemsep}{0pt}
  \item 
    $X = \{x_{i} \mid i \in [3m]\}$,
    $Y = \{y_{i} \mid i \in [a_{3m}]\}$, and
    $Z = \{z_{i} \mid i \in [k-a_{3m}+1]\}$.
  \item $X$, $Y$, and $Z$ are cliques.
  \item For $i \in [3m]$, $x_{i}$ is adjacent to $\{y_{j} \mid j \in [a_{i}]\}$.
  \item For $i \in [m]$, $y_{i}$ is adjacent to $\{z_{j} \mid j \in [|Z| - B  - 12m + 15]\}$.
  \item For $i \in [m+1, a_{3m}]$, $y_{i}$ is adjacent to $\{z_{j} \mid j \in [|Z| - \gamma_{i}]\}$.
\end{itemize}
The two expressions used as upper bounds for subsets of $Z$ are positive. 
Indeed, since $a_{3m} < B/2$, $B \ge 24m$, and $\gamma_{i} \le 3m$, we have $|Z|-B-12m+15 > 0$ and $|Z|-\gamma_{i} > 0$.
Note that, for $i \in [a_{3m}]$, the neighborhood of $y_{i}$ in $X$ is $\{x_{j} \mid j \in [3m - \gamma_{i} + 1, 3m]\}$,
and thus $y_{i}$ has $\gamma_{i}$ neighbors in $X$.
See \cref{fig:reduction}.

\begin{figure}[tb]
  \centering
  \includegraphics{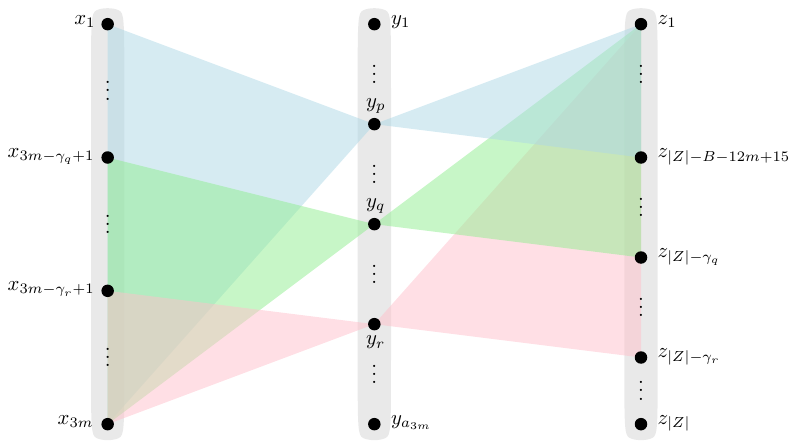}
  \caption{The neighborhoods of $y_{p}$, $y_{q}$, and $y_{r}$ in $X$ and $Z$, where $p \in [m]$ and $m < q < r \le a_{3m}$.}
  \label{fig:reduction}
\end{figure}

\subsection{Graph class}

We now show that $G$ belongs to the target graph class.
The next observation is useful for showing that $G$ is a proper interval graph (see \cref{fig:reduction}).

\begin{observation}
\label{obs:chain-property}
If $i < i'$, then $N_{G}(y_{i}) \cap Z \subseteq N_{G}(y_{i'}) \cap Z$.
\end{observation}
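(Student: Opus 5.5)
The observation reduces to comparing prefix lengths. By construction, for every $i \in [a_{3m}]$ the neighborhood $N_{G}(y_{i}) \cap Z$ is a prefix $\{z_{j} \mid j \in [t_{i}]\}$ of $Z$. Here
\[
  t_{i} = |Z| - B - 12m + 15 \text{ for } i \in [m], \qquad t_{i} = |Z| - \gamma_{i} \text{ for } i \in [m+1, a_{3m}].
\]
Two prefixes of the same ordered sequence are nested according to their lengths. It therefore suffices to show that $t_{i} \le t_{i'}$ whenever $i < i'$, i.e., that the sequence $(t_{i})$ is non-decreasing. I split into three cases according to the positions of $i$ and $i'$ relative to $m$.

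\emph{Both indices in $[m]$.} Then $t_{i} = t_{i'}$, and the claim is trivial.

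\emph{Both indices in $[m+1, a_{3m}]$.} Since $\gamma$ is non-increasing (noted right after its definition), $i < i'$ gives $\gamma_{i} \ge \gamma_{i'}$. Hence $t_{i} = |Z| - \gamma_{i} \le |Z| - \gamma_{i'} = t_{i'}$.

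\emph{Mixed case: $i \in [m]$ and $i' \in [m+1, a_{3m}]$.} This is the only case needing a numerical check, and it is the (mild) main step. We need $|Z| - B - 12m + 15 \le |Z| - \gamma_{i'}$, i.e., $\gamma_{i'} \le B + 12m - 15$. This follows from $\gamma_{i'} \le 3m$ together with the earlier assumption $B \ge 24m$:
\[
  B + 12m - 15 \ge 36m - 15 > 3m \ge \gamma_{i'},
\]
where the strict inequality uses $m \ge 1$.

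Combining the three cases, $t_{i} \le t_{i'}$ for all $i < i'$. Hence $N_{G}(y_{i}) \cap Z \subseteq N_{G}(y_{i'}) \cap Z$, which is the observation.
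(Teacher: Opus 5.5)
Your proof is correct and follows essentially the same route as the paper: the same three-case split by the position of $i, i'$ relative to $m$, with the mixed case settled by $\gamma_{i'} \le 3m$ and $B \ge 24m$. Your explicit remark that each $N_{G}(y_{i}) \cap Z$ is a prefix of $(z_{1}, z_{2}, \dots)$, and the arithmetic $B + 12m - 15 \ge 36m - 15 > 3m$, just spell out what the paper leaves implicit.
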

\begin{proof}
If $i, i' \in [m]$, then $y_{i}$ and $y_{i'}$ have the same neighbors in $Z$.
If $i, i' \in [m+1, a_{3m}]$, then we have $\gamma_{i} \ge \gamma_{i'}$, and thus $N_{G}(y_{i}) \cap Z \subseteq N_{G}(y_{i'}) \cap Z$.
Finally, assume that $i \in [m]$ and $i' \in [m+1, a_{3m}]$.
By $\gamma_{i'} \le 3m$ and $B \ge 24m$, it follows that $|Z| - B - 12m+ 15 \le |Z| - \gamma_{i'}$,
and thus $N_{G}(y_{i}) \cap Z \subseteq N_{G}(y_{i'}) \cap Z$ holds in this case as well.
\end{proof}

\begin{lemma}
\label{lem:G-pint}
$G$ is a proper interval graph of linear clique-width at most~$4$ and diameter~$3$.
\end{lemma}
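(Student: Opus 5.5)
The plan is to give an explicit vertex ordering (equivalently, a proper interval representation) and then read off the other two properties from it. I order the vertices as
\[
x_{1}, \dots, x_{3m},\; y_{a_{3m}}, y_{a_{3m}-1}, \dots, y_{1},\; z_{1}, z_{2}, \dots, z_{|Z|}.
\]
I will verify that this is an \emph{umbrella} (proper interval) ordering. That means that for every vertex, its closed neighborhood is a set of consecutive vertices in the ordering, and the left and right endpoints of these neighborhoods are both monotone along the ordering. By the standard characterization of proper interval graphs (Roberts), such an ordering gives a proper interval representation.

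The verification splits by the three cliques.
\begin{itemize}
\item For $x_{i}$: $N[x_{i}]$ consists of all of $X$ together with $\{y_{j} \mid j \le a_{i}\}$. This is a contiguous block ending at $y_{1}$. As $i$ grows, $a_{i}$ grows, so the left endpoint (namely $y_{a_{i}}$) moves right in the ordering.
\item For $y_{i}$: its neighbors in $X$ are the suffix $\{x_{3m-\gamma_{i}+1}, \dots, x_{3m}\}$. It is adjacent to all of $Y$. Its neighbors in $Z$ form the prefix $z_{1}, \dots, z_{t_{i}}$. So $N[y_{i}]$ is contiguous. As we move along the ordering from $y_{a_{3m}}$ to $y_{1}$ (decreasing $i$), $\gamma_{i}$ does not decrease, so the left endpoint moves left. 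By \cref{obs:chain-property}, $t_{i}$ does not increase, so the right endpoint moves left too.
\item For $z_{j}$: its neighbors in $Y$ are those $y_{i}$ with $t_{i} \ge j$. By \cref{obs:chain-property} this is a set $\{y_{i} \mid i \ge i_{j}\}$, a contiguous block ending at $y_{1}$ in the ordering, followed by all of $Z$. As $j$ grows, $i_{j}$ grows, so the left endpoint moves right.
\end{itemize}
I must also check monotonicity across the boundaries between blocks. This needs two facts: some $y_{i}$ adjacent to $x_{3m}$ reaches into $Z$, and $x_{1}$ reaches into $Y$. Both are routine checks that the neighborhood sets are nonempty.

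For diameter~$3$, every $x$ and every $z$ is adjacent to $y_{a_{3m}}$. Indeed, $y_{a_{3m}}$ is adjacent to $x_{3m}$, and for any $i \ge m+1$ its $Z$-neighborhood is nonempty, so it reaches $z_{1}$. Hence every vertex is within distance $3$ of every other: $x$–$x$ and $z$–$z$ pairs are at distance $1$ (cliques), and $x$–$z$ pairs are connected via $x \to x_{3m} \to y_{a_{3m}} \to z_{1} \to z$. This path has length up to $4$, so I must be more careful. Every $x_{i}$ is adjacent to $y_{1}$ (since $a_{i} \ge 1$), and $y_{1}$ is adjacent to $z_{1}$. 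So $\dist(x_{i}, z_{j}) \le 3$ via $x_{i}, y_{1}, z_{1}, z_{j}$. Every other pair is at distance at most $2$ through $Y$. For the lower bound, $x_{1}$ and $z_{|Z|}$ have distance exactly $3$. The reason is that $N(x_{1}) \cap Y = \{y_{1}, \dots, y_{a_{1}}\}$, and none of these is adjacent to $z_{|Z|}$, because $t_{i} < |Z|$ for every $i$ (since $\gamma_{i} \ge 1$ and the other bound is positive but smaller than $|Z|$). Also $x_{1}$ and $z_{|Z|}$ have no common neighbor, since $X$ and $Z$ are nonadjacent.

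For linear clique-width at most~$4$, I build $G$ by adding vertices in the umbrella order above, using labels that encode each vertex's role. The labels are:
\begin{itemize}
\item ``finished'': no future neighbors;
\item ``$X$-active'': $X$-vertices still awaiting $Y$-neighbors;
\item ``$Y$-active'': $Y$-vertices still awaiting $Z$-neighbors;
\item a fresh label for the newly inserted vertex.
\end{itemize}
The key property is the contiguity of neighborhoods in the umbrella order. It means that when a new vertex $v$ is inserted, its earlier neighbors are exactly a suffix of the already-placed vertices. Moreover, vertices whose last neighbor precedes $v$ can be relabeled ``finished'' beforehand. Then $v$ joins all currently active vertices of the relevant type, and those that end now are relabeled to finished. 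The main obstacle is the bookkeeping at the $X$/$Y$ and $Y$/$Z$ transitions. There, active vertices of two types could coexist, so I need to check that the order in which vertices expire fits within four labels. I expect this to work because the active vertices always form one suffix that is a clique within each block. In the worst case, I will fall back on the known fact that proper interval graphs with such a two-layer nested structure have bounded linear clique-width.
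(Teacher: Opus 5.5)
Your proof has two genuine problems. The proper-interval ordering is wrong (easily fixed). The linear clique-width argument cannot work in the order you chose.

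Your vertex ordering is not a proper interval ordering, because you reversed the $Y$-block. Now $\{y_1,\dots,y_{a_i}\}$ sits at the \emph{right} end of the $Y$-block, so $N[x_i]=X\cup\{y_1,\dots,y_{a_i}\}$ is not contiguous whenever $a_i<a_{3m}$. Concretely, if $a_1<a_{3m}$, then $x_1\prec y_{a_{3m}}\prec y_1$ with $\{x_1,y_1\}\in E$ but $\{x_1,y_{a_{3m}}\}\notin E$. The $Y$--$Z$ side fails unconditionally. Take $y_{a_{3m}}\prec y_1\prec z_{|Z|-\gamma_{a_{3m}}}$: the outer pair is adjacent, but $y_1$ is not adjacent to $z_{|Z|-\gamma_{a_{3m}}}$, since $|Z|-B-12m+15<|Z|-\gamma_{a_{3m}}$. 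Your own computation signals this. Inside the $Y$-block you find both endpoints of $N[y_i]$ moving \emph{left} as you move right along the ordering, which is exactly what an umbrella ordering forbids. Also, the left endpoint of $N[x_i]$ is $x_1$, not $y_{a_i}$. The repair is to list $Y$ as $y_1,\dots,y_{a_{3m}}$, which is the paper's ordering. Then $N[x_i]$ runs from $x_1$ to $y_{a_i}$, and $N[y_i]$ runs from $x_{3m-\gamma_i+1}$ to the last $Z$-neighbor of $y_i$. The $Y$-neighbors of each $z_j$ form a suffix of the $Y$-block. With the directions corrected, all your checks go through. Your diameter argument is fine once you drop the false opening claim that every vertex of $X\cup Z$ is adjacent to $y_{a_{3m}}$, and route $y$--$z$ pairs through $z_1$, since $z_{|Z|}$ has no neighbor in $Y$.

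The linear clique-width part is not a proof, and the scheme as described cannot succeed. Relabelings only merge label classes, so two vertices that share a label at some moment have the same adjacency to every vertex inserted later. In the umbrella order, all of $X$ is inserted before any vertex of $Y$. Two vertices $x_i,x_{i'}$ with $a_i\ne a_{i'}$ have different $Y$-neighborhoods, so they must carry distinct labels when $x_{3m}$ is inserted. A single ``$X$-active'' label therefore cannot let the $x_i$ expire one value of $a_i$ at a time; you would need as many labels as there are distinct values $a_i$. The same obstruction recurs for $Y$ against $Z$. Falling back on ``bounded linear clique-width'' does not give the bound~$4$. The paper instead invokes the theorem of Heggernes, Meister, and Papadopoulos: a proper interval graph has linear clique-width at most its independence number plus one. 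It combines this with $\alpha(G)\le 3$, since $V$ is covered by the three cliques $X$, $Y$, $Z$.

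If you prefer a self-contained argument, change the insertion order and use one label each for $X$, $Y$, $Z$ plus a fresh label:
\begin{enumerate}
\item Insert the $z_k$ without $Y$-neighbors, each joined to the $Z$-label.
\item For $j=a_{3m},a_{3m}-1,\dots,1$, insert each $x_i$ with $a_i=j$ (joined to the $X$-label), then $y_j$ (joined to the $X$- and $Y$-labels), then each $z_k$ whose lowest-indexed $Y$-neighbor is $y_j$ (joined to the $Y$- and $Z$-labels).
\item Relabel every new vertex into its class after inserting it.
\end{enumerate}
The $X$--$Y$ and $Y$--$Z$ neighborhoods are nested. Hence at each insertion, the present vertices carrying the joined labels are exactly the new vertex's earlier neighbors.
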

\begin{proof}
A linear ordering $\prec$ of vertices is a \emph{proper interval ordering} if for every triple $u,v, w$ of vertices,
$u \prec v \prec w$ and $\{u,w\} \in E$ imply $\{u,v\}, \{v,w\} \in E$.
It is known that a graph is a proper interval graph if and only if it admits a proper interval ordering~\cite{Looges93}.
We show that $(x_{1}, \dots, x_{|X|},\; y_{1}, \dots, y_{|Y|},\; z_{1}, \dots, z_{|Z|})$ is a proper interval ordering of $G$.

Let $u, v, w$ be vertices with $u \prec v \prec w$ and $\{u,w\} \in E$.
Since there is no edge between $X$ and $Z$, we have either $\{u,w\} \subseteq X \cup Y$ or $\{u,w\} \subseteq Y \cup Z$.
If $u$ and $w$ belong to the same set $W \in \{X,Y,Z\}$, then $v \in W$ as well, and thus $u,v,w$ form a clique.
The remaining are the case of $u \in X$, $w \in Y$ and of $u \in Y$, $w \in Z$.

First consider the case $u = x_{i} \in X$ and $w = y_{j} \in Y$.
Since $\{x_{i}, y_{j}\} \in E$, $x_{i}$ is adjacent to all vertices $y_{j'}$ with $j' < j$.
Also, every vertex $x_{i'}$ with $i < i'$ is adjacent to $y_{j}$ since $a_{i} \le a_{i'}$.
Since $v \in \{x_{i'} \mid i' > i\} \cup \{y_{j'} \mid j' < j\}$, we have $\{u,v\}, \{v,w\} \in E$.
(Recall that $X$ and $Y$ are cliques.)

The proof for the remaining case, where $u = y_{i} \in Y$ and $w = z_{j} \in Z$, is almost identical to the previous one.
Since $\{y_{i}, z_{j}\} \in E$, $y_{i}$ is adjacent to all vertices $z_{j'}$ with $j' < j$.
Also, every vertex $y_{i'}$ with $i < i'$ is adjacent to $z_{j}$ since $N_{G}(y_{i}) \cap Z \subseteq N_{G}(y_{i'}) \cap Z$ by \cref{obs:chain-property}.
Since $v \in \{y_{i'} \mid i' > i\} \cup \{z_{j'} \mid j' < j\}$, we have $\{u,v\}, \{v,w\} \in E$.
(Recall that $Y$ and $Z$ are cliques.)

We next show that $G$ has linear clique-width at most~$4$.
Heggernes, Meister, and Papadopoulos~\cite[Theorem~3.1]{HeggernesMP09} showed that the linear clique-width of
a proper interval graph is at most the independence number plus~$1$.
(They stated their result only for clique-width, but their proof actually constructs a linear expression, 
and thus gives the same upper bound for linear clique-width.)
Since the vertex set $V$ of $G$ consists of three cliques $X$, $Y$, $Z$, 
the independence number of $G$ is at most~$3$, and thus, its linear clique-width is at most~$4$.

Finally, we show that $G$ has diameter~$3$.
Recall that $X$, $Y$, $Z$ are cliques,
that $y_{1}$ is adjacent to every vertex of $X$, and that $z_{1}$ is adjacent to every vertex of $Y$.
If two vertices $u,v$ both belong to the same set $X$, $Y$, or $Z$, then they are adjacent.
For $x \in X$ and $y \in Y$, there is a path $(x,y_{1},y)$ (possibly with $y = y_{1}$) of length at most~$2$.
Similarly, for $y \in Y$ and $z \in Z$, there is a path $(y,z_{1},z)$ (possibly with $z = z_{1}$) of length at most~$2$.
For $x \in X$ and $z \in Z$, there is a path $(x,y_{1},z_{1},z)$ (possibly with $z = z_{1}$) of length at most~$3$.
Thus, $G$ has diameter at most~$3$.
Finally, $z_{|Z|}$ has neighbors only in $Z$, and thus it has distance exactly~$3$ to vertices in $X$.
\end{proof}

\subsection{Equivalence}

To complete the proof of \cref{thm:stc-pint},
we show that $\stc(G) \le k$ if and only if $\mathcal{I}$ is a yes-instance of \textsc{3-Partition}.
We prove one direction in \cref{lem:3partition->stc} and the other direction in \cref{lem:stc->3partition}.

\paragraph{Vertex degrees}
From the construction of $G$, we can easily compute the degree of each vertex,
which will be helpful when we compute the congestion of an edge in a spanning tree.
We summarize them below and use them in later discussions without explicit references.
\begin{itemize}
  \setlength{\itemsep}{0pt}
  \item For $i \in [3m]$, $\deg_{G}(x_{i}) = (|X|-1) + a_{i} = a_{i} + 3m - 1$ ($< k$).
  \item For $i \in [m]$, $\deg_{G}(y_{i}) = |X| + (|Y|-1) + (|Z| - B  - 12m + 15) = k - B - 9m + 15$.
  \item For $i \in [m+1, a_{3m}]$, $\deg_{G}(y_{i}) = \gamma_{i} + (|Y|-1) + (|Z| - \gamma_{i}) = k$.
  \item For $z_{i} \in Z$, $\deg_{G}(z_{i}) \in [|Z|-1, |Y \cup Z| -1] = [k - a_{3m}, k]$.
\end{itemize}
We can see that $\Delta(G) = k$. 
Note that $\delta(G[Y \cup Z]) = k - B - 12m + 15\ge (k+1)/2 + 1$ as $k = 3B$ and $B \ge 24m$.

\begin{lemma}
\label{lem:3partition->stc}
If $\mathcal{I}$ is a yes-instance of \textsc{3-Partition}, then $\stc(G) \le k$.
\end{lemma}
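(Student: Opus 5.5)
The plan is to write down an explicit spanning tree $T$ built from a solution $(A_{1}, \dots, A_{m})$ of $\mathcal{I}$, and then check directly that every edge of $T$ has congestion at most $k$. The tree is centred at $z_{1}$, which is adjacent to all of $Y \cup Z$.

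\textbf{The tree $T$.}
\begin{itemize}
  \item Every vertex $z_{j}$ with $j \ge 2$ is a leaf attached to $z_{1}$.
  \item Every $y_{i}$ with $i \in [m+1, a_{3m}]$ is a leaf attached to $z_{1}$.
  \item Every $y_{i}$ with $i \in [m]$ is attached to $z_{1}$ and receives the three vertices $x_{j}$, $j \in A_{i}$, as leaf children.
\end{itemize}
These last edges exist: $x_{j}$ is adjacent to $y_{1}, \dots, y_{a_{j}}$, and $a_{j} \ge 8m \ge m$. Each $x_{j}$ is used exactly once because $(A_{1}, \dots, A_{m})$ partitions $[3m]$, and each $|A_{i}| = 3$. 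So $T$ is a spanning tree of $G$.

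\textbf{Congestion bounds.}
\begin{itemize}
  \item A tree edge cutting off a single vertex $v$ has congestion $\deg_{G}(v)$. This covers the edges $z_{1}z_{j}$, the edges $z_{1}y_{i}$ with $i > m$, and the edges $y_{i}x_{j}$. By the degree list, all of these are at most $\Delta(G) = k$; in particular $\deg_{G}(x_{j}) < k$.
  \item The only nontrivial edges are $y_{i}z_{1}$ with $i \in [m]$. Removing such an edge separates $C_{i} = \{y_{i}\} \cup \{x_{j} \mid j \in A_{i}\}$ from the rest, and $C_{i}$ is a clique on four vertices.
\end{itemize}
For these edges I would compute the cut as the sum of degrees in $C_{i}$ minus twice the number of edges inside $C_{i}$, which is $6$:
\[
  \deg_{G}(y_{i}) + \sum_{j \in A_{i}} \deg_{G}(x_{j}) - 12
  = (k - B - 9m + 15) + \Bigl(\sum_{j \in A_{i}} a_{j} + 9m - 3\Bigr) - 12 .
\]
Using $\sum_{j \in A_{i}} a_{j} = B$, this equals exactly $k$.

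The only step that needs care is this last calculation. The constant $-12m+15$ in the construction is tuned so that the bound is met with equality, so the main thing to verify is that it is the degree of $y_{i}$ for $i \in [m]$, not the others, that carries this correction. Everything else follows immediately from the degree list.
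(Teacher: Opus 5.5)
Your proposal is correct and follows the paper's proof essentially verbatim. It uses the same tree (a star at $z_{1}$ over $Y \cup Z$, with the triple $\{x_{j} \mid j \in A_{i}\}$ hung below $y_{i}$), bounds the leaf edges by $\Delta(G) = k$, and applies the same $4$-clique computation to show the edges $\{z_{1}, y_{i}\}$, $i \in [m]$, have congestion exactly $k$; you also check explicitly that the tree edges exist in $G$, which the paper leaves implicit.
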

\begin{proof}
Let $(A_{1},\dots,A_{m})$ be a partition of $[3m]$
such that $\sum_{j \in A_{i}} a_{j} = B$ and $|A_{i}| = 3$ for each $i \in [m]$.
We construct a spanning tree $T$ of $G$ as follows:
\[
  E(T) = \{\{z_{1}, w\} \mid w \in Y \cup (Z \setminus \{z_{1}\})\} \cup \bigcup_{i \in [m]} \{\{y_{i}, x_{j}\} \mid j \in A_{i}\}.
\]
In $T$, the edges in $\{\{z_{1}, y_{i}\} \mid i \in [m]\}$ are the inner edges and all other edges are leaf edges.
Since a leaf edge has congestion at most $\Delta(G) = k$, it suffices to show that $\cng_{G,T}(\{z_{1}, y_{i}\}) \le k$ for each $i \in [m]$.
Fix $i \in [m]$. Since $\{y_{i}\} \cup \{x_{j} \mid j \in A_{i}\}$ is a $4$-clique,
we have $\cng_{G,T}(\{z_{1}, y_{i}\}) = k$ as follows:
\begin{align*}
  \cng_{G,T}(\{z_{1}, y_{i}\})
  &= 
  \deg_{G}(y_{i}) + \sum_{j \in A_{i}} \deg_{G}(x_{j}) - 2 \cdot \binom{4}{2}
  \\
  &= 
  (k - B - 9m + 15) + \sum_{j \in A_{i}} (a_{j} + 3m - 1) -12
  \\
  &= 
  (k - B - 9m + 15) + (B + 9m-3) -12
  \\
  &= 
  k.
  \qedhere
\end{align*}
\end{proof}

\begin{lemma}
\label{lem:stc->3partition}
If $\stc(G) \le k$, then $\mathcal{I}$ is a yes-instance of \textsc{3-Partition}.
\end{lemma}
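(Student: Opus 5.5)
Take a spanning tree $T$ of $G$ with $\cng_{G}(T) \le k$, and apply \cref{lem:spider} to $S = Y \cup Z$ with $\rho_{1} = 1$ and $\rho_{2} = 1/2$. The hypotheses hold. First, $|S| = a_{3m} + k - a_{3m} + 1 = k+1$. Second, $\delta(G[Y\cup Z]) \ge (k+1)/2 + 1$, as noted above. Hence the minimal subtree $T'$ of $T$ spanning $Y \cup Z$ is a spider. Its branching vertex $r$ has degree at least $|S|-1 = k$, and all its degree-$2$ vertices lie in $X$.

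\textbf{Step 1: locating the branching vertex.} A vertex of $X$ has degree $a_i+3m-1<k$, so $r \notin X$ and $r \in Y \cup Z$. Every leg of $T'$ goes from $r$ through vertices of $X$ to a leaf in $S$. Since $X$ has no neighbours in $Z$, a leg through an $X$-vertex must end in $Y$. I would then argue that no leg passes through $X$ at all. Such a leg consists of $r$, then $x$'s, then some $y$. Cutting the edge next to its $Y$-end isolates a part containing a $y$. I would bound the congestion of this edge (or of the edge at $r$) using the degrees of $y$ and $r$: $\deg_{G}(y) \ge k - B - 9m + 15$, and the $x$'s on the leg contribute extra edges to $Y$. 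I expect to show it exceeds $k$, or at least that this configuration forces $\deg_{T}(r) > \deg_{G}(r)$. So $r$ is adjacent in $T$ to all of $S \setminus \{r\}$, which needs $\deg_{G}(r) \ge k$. This excludes $y_1,\dots,y_m$, whose degree is smaller. For $r = y_i$ with $i > m$, it would require adjacency to all of $Z$, but $y_i$ misses $\gamma_i \ge 1$ vertices of $Z$, so this is impossible. Hence $r = z_j$ with $\deg_{G}(z_j) = k$, i.e.\ $z_j$ is adjacent to all of $Y$, and $T$ restricted to $S$ is the star centered at $r$.

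\textbf{Step 2: assigning $X$.} Each $x_i$ hangs below a unique edge $\{r, s\}$ of this star. Since $X \cap N(Z) = \emptyset$, that $s$ lies in $Y$. Let $X_s$ be the set of $x$'s hanging below $y_s$. The congestion of $\{r, y_s\}$ equals $\deg_{G}(y_s) + \sum_{x \in X_s}\deg_{G}(x)$ minus twice the number of edges inside $\{y_s\}\cup X_s$, and this must be at most $k$. For $s > m$ we have $\deg_{G}(y_s) = k$, so any attached $x$ contributes $a_i + 3m - 1 - 2(\text{its edges inside}) > 0$ when $a_i \ge 8m$. Thus $X_s = \emptyset$, and all of $X$ is split among $y_1,\dots,y_m$. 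For $s \le m$, write $c = |X_s|$ and $\sigma = \sum_{x_i \in X_s} a_i$. The inside edges number at most $\binom{c+1}{2}$, giving
\[
k - B - 9m + 15 + \sigma + c(3m-1) - c(c+1) \le k,
\]
that is, $\sigma \le B + 9m - 15 - c(3m-1) + c(c+1)$. When $c \le 3$ this gives $\sigma \le B + O(m)$, and since $a_i \ge 8m$ the integrality scaling turns this into $\sigma \le B$. For $c > 3$ one gets $\sigma > c \cdot B/4 \ge$ roughly $B + B/4$, while the right side is at most $B + O(c^2 + m)$. Here I would use $B \ge 24m$ and $c \le 3m$, and this step may need care. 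Once $\sigma \le B$ holds for every part and $\sum \sigma = mB$, every part sums to exactly $B$, which yields a 3-partition.

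\textbf{Main obstacle.} I expect the hardest steps to be the exclusion of $X$-vertices on legs in Step 1 and the precise slack computation for large $c$ in Step 2.
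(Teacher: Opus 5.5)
\textbf{Gap in Step~1.} This step is the structural core of the proof. You never show that no leg of $T'$ passes through $X$, and the congestion route you suggest cannot work, because such a leg is not locally expensive. Take a leg $(r,x_j,y_1)$ with nothing else hanging below $x_j$. The edge $\{x_j,y_1\}$ has congestion $\deg_G(y_1)<k$. The edge $\{r,x_j\}$ has congestion $\deg_G(x_j)+\deg_G(y_1)-2=k-B+a_j-6m+12<k$. As written, your conclusion that $r$ is $T$-adjacent to all of $S\setminus\{r\}$, and hence your exclusion of $r\in Y$, rests on an unproved claim.

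The missing idea is already in your sentence ``a leg through an $X$-vertex must end in $Y$''; you only need to use it in the opposite order. It says every leg ending in $Z$ is a single edge, so $r$ is $T$-adjacent to every vertex of $Z\setminus\{r\}$. If $r\in Y$, this contradicts the fact that no vertex of $Y$ is adjacent to $z_{|Z|}$, so $r\in Z$. Only now does ``no $X$-vertex on a leg'' follow. The vertex after $r$ on a leg cannot lie in $X$, since $X$ has no neighbours in $Z$. So every leg is a single edge, and $T'=T[Y\cup Z]$ is a star centred in $Z$. This is essentially how the paper argues, via $z_{|Z|}$.

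\textbf{Step~2.} This follows the paper's argument, but two numerical steps need repair.

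First, no ``integrality scaling'' is available: the construction only guarantees $a_i\ge 8m$, not divisibility. For $c\le 2$, use $a_i<B/2$ directly. For $c=3$, your bound is exactly $B+9m-15-3(3m-5)=B$.

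Second, for $c\ge 4$ the estimate $B+O(c^2+m)$ throws away the sign of $-c(3m-2-c)$. It already fails at $c=4$, where you only know $\sigma>B$. Split on $\sigma$ instead, as the paper does:
\begin{itemize}
\item If $\sigma\le 2B$, then $c\le 7\le 3m-5$, and concavity gives $c(3m-2-c)\ge 3(3m-5)$. So your right-hand side is at most $B<\sigma$.
\item If $\sigma>2B$, then $c(3m-2-c)\ge -6m$. So the right-hand side is at most $B+15m-15<2B<\sigma$, because $B\ge 24m$.
\end{itemize}
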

\begin{proof}
Let $T$ be a spanning tree of $G$ with congestion at most $k$
and let $T'$ be the inclusion-wise minimal subtree of $T$ that contains all vertices of $Y \cup Z$.
Let $\rho_{1} = 1$, $\rho_{2} = 1/2$, and $S = Y \cup Z$.
We can see that $|S| = |Y \cup Z| = k+1 = \rho_{1} k + 1$ ($\ge 4$)
and $\delta(G[S]) = \delta(G[Y \cup Z]) \ge (k+1)/2 + 1 = \rho_{2} |Y \cup Z| + 1 = \rho_{2} |S| + 1$.
Hence, by \cref{lem:spider}, $T'$ is a spider such that all degree-$2$ vertices belong to $X$ ($= V \setminus (Y \cup Z)$)
and the branching vertex has degree at least $|Y \cup Z| - 1 = k$.

\begin{claim}
\label{clm:star}
$T'$ is a star centered at a vertex of $Z$.
\end{claim}
\begin{subproof}[Proof of \cref{clm:star}]
Let $r$ be the branching vertex of $T'$.
We first show that $r \in Z$.
As $\deg_{T'}(r) \ge k$, we have $r \in Y \cup Z$.
Observe that no vertex $y_{i} \in Y$ is adjacent to $z_{|Z|}$ since $|Z| - B - 12m + 15 < |Z|$ and $|Z| - \gamma_{i} < |Z|$ for all $i$.
This implies that $N_{G}[z_{|Z|}] = Z$.
Thus we have $\deg_{G}(z_{|Z|}) = |Z| - 1 = k-a_{3m} < k$, and so $z_{|Z|} \ne r$.
Moreover, since $N_{G}[z_{|Z|}] = Z$, the $r$--$z_{|Z|}$ path in $T'$ contains a vertex of $Z \setminus \{z_{|Z|}\}$.
This implies that $r \in Z$ since every degree-$2$ vertex of $T'$ belongs to $X$.

It remains to show that $V(T') = Y \cup Z$. 
This will imply that $T'$ is a star, because no vertex of $Y \cup Z$ has degree~$2$ in $T'$.
Suppose to the contrary that $T'$ contains a vertex $x \in X$.
Since $r \in Z$ and there is no edge between $X$ and $Z$, the $r$--$x$ path in $T'$ contains a vertex of $Y$, say $y$.
This vertex $y$ has degree~$2$ in $T'$, contradicting the assumption that all degree-$2$ vertices of $T'$ belong to $X$.
\end{subproof}

Let $z^{*} \in Z$ be the center of the star $T'= T[Y \cup Z]$.
We consider $z^{*}$ as the root of $T$.

\begin{claim}
\label{clm:z-leaves}
Each $z \in Z \setminus \{z^{*}\}$ is a leaf of $T$.
\end{claim}
\begin{subproof}[Proof of \cref{clm:z-leaves}]
Observe that $N_{G}(z) \subseteq Y \cup Z \subseteq N_{T}[z^{*}]$.
Hence, if $z$ has a neighbor $w \ne z^{*}$ in $T$,
then the three vertices $z^{*}$, $z$, $w$ form a triangle in $T$, a contradiction.
\end{subproof}

Observe that for each $y_{i} \in Y$, its descendants in $T$ belong to $X$.
Let $X_{i} \subseteq X$ be the set of descendants of $y_{i}$ in $T$.

\begin{claim}
\label{clm:y-leaves}
For $i \in [m+1, a_{3m}]$, $y_{i}$ is a leaf of $T$.
\end{claim}
\begin{subproof}[Proof of \cref{clm:y-leaves}]
Suppose to the contrary that $y_{i}$ is not a leaf of $T$ (i.e., $X_{i} \ne \emptyset$) for some $i \in [m+1, a_{3m}]$.
This leads to the contradiction that $\cng_{G,T}(\{z^{*}, y_{i}\}) > k$, as shown below:
\begin{align*}
  \cng_{G,T}(\{z^{*}, y_{i}\})
  &\ge
  \deg_{G}(y_{i}) + \sum_{x_{j} \in X_{i}} \deg_{G}(x_{j}) - 2 \cdot \binom{|X_{i}|+1}{2}
  \\
  &=
  k + \sum_{x_{j} \in X_{i}} ((a_{j} + 3m-1) - (|X_{i}|+1))
  \\
  &>
  k,
\end{align*}
where the last inequality holds
because $X_{i} \ne \emptyset$ and 
$(a_{j} + 3m-1) - (|X_{i}|+1) = (a_{j} -2) + (3m - |X_{i}|) > 0$ for every $j \in [3m]$.
\end{subproof}

\cref{clm:z-leaves,clm:y-leaves} imply that $X_{1}, \dots, X_{m}$ is a partition of~$X$.
Now we show that this partition gives a solution to \textsc{3-Partition}.

\begin{claim}
\label{clm:correct-sum}
For $i \in [m]$, $\sum_{x_{j} \in X_{i}} a_{j} = B$.
\end{claim}
\begin{subproof}[Proof of \cref{clm:correct-sum}]
Since $\sum_{x_{j} \in X} a_{j} = mB$, it suffices to show that $\sum_{x_{j} \in X_{i}} a_{j} \le B$ for every $i \in [m]$.
Suppose to the contrary that $\sum_{x_{j} \in X_{i}} a_{j} > B$ for some $i \in [m]$.
The congestion of the edge $\{z^{*}, y_{i}\}$ can be lower-bounded as
\begin{align*}
  \cng_{G,T}(\{z^{*}, y_{i}\})
  &\ge
  \deg_{G}(y_{i}) + \sum_{x_{j} \in X_{i}} \deg_{G}(x_{j}) - 2 \cdot \binom{|X_{i}|+1}{2}
  \\
  &=
  (k-B-9m+15) + \sum_{x_{j} \in X_{i}} (a_{j}+3m-1) - |X_{i}| (|X_{i}|+1)
  \\
  &=
  (k-B-9m+15) + \sum_{x_{j} \in X_{i}} a_{j} + |X_{i}| ((3m -2) - |X_{i}|).
\end{align*}

Now we obtain a lower bound on $\sum_{x_{j} \in X_{i}} a_{j} + |X_{i}| ((3m -2) - |X_{i}|)$.
First observe that $|X_{i}| \ge 3$ since $\sum_{x_{j} \in X_{i}} a_{j} > B$ and $a_{j} < B/2$ for $j \in [3m]$.
Hence $3 \le |X_{i}| \le 3m$ holds, and it follows that $|X_{i}| ((3m -2) - |X_{i}|) \ge 3m ((3m -2) - 3m) = -6m$.
If $\sum_{x_{j} \in X_{i}} a_{j} \le 2B$, then $|X_{i}| \le 7$ holds as $a_{j} > B/4$ for $j \in [3m]$.
Hence, in this case, we have a smaller range of $3 \le |X_{i}| \le 7$ ($\le 3m-5$), which implies $|X_{i}| ((3m -2) - |X_{i}|) \ge 3 ((3m -2) - 3) = 9m -15$.
Now we obtain
\begin{align*}
  \sum_{x_{j} \in X_{i}} a_{j} + |X_{i}| ((3m -2) - |X_{i}|)
  &>
  \begin{cases}
    B + 9m-15 & (\sum_{x_{j} \in X_{i}} a_{j} \le 2B) \\
    2B -6m & (\sum_{x_{j} \in X_{i}} a_{j} > 2B) \\
  \end{cases}
  \\
  &\ge
  B + 9m-15,
\end{align*}
where the last inequality holds by the assumption $B \ge 24m$.

Combining the discussions above, we reach the contradiction that $\cng_{G,T}(\{z^{*}, y_{i}\}) > k$ as follows:
\begin{align*}
  \cng_{G,T}(\{z^{*}, y_{i}\})
  &\ge
  (k-B-9m+15) + \sum_{x_{j} \in X_{i}} a_{j} + |X_{i}| ((3m -2) - |X_{i}|)
  \\
  &>
  (k-B-9m+15) + B + 9m-15
  \\
  &= k.
  \qedhere
\end{align*}
\end{subproof}

\cref{clm:correct-sum} implies that the partition $A_{1}, \dots, A_{m}$ of $[3m]$ with $A_{i} = \{j \in [3m] \mid x_{j} \in X_{i}\}$ gives a solution to \textsc{3-Partition}.
This completes the proof.
\end{proof}

%%%%%%%%%%%%%%%%%%%%%%%%%%%%%%%%%%%%%%%%%%%%%%%%%%%%%%%%%%%%%%%%%%%%%%%%%%%%%%%%%%%%%%%%%%%%%%%%%%%%%%%%%%%%%%%%%%%%%%%%%%%%%%%%
%%%%%%%%%%%%%%%%%%%%%%%%%%%%%%%%%%%%%%%%%%%%%%%%%%%%%%%%%%%%%%%%%%%%%%%%%%%%%%%%%%%%%%%%%%%%%%%%%%%%%%%%%%%%%%%%%%%%%%%%%%%%%%%%
%%%%%%%%%%%%%%%%%%%%%%%%%%%%%%%%%%%%%%%%%%%%%%%%%%%%%%%%%%%%%%%%%%%%%%%%%%%%%%%%%%%%%%%%%%%%%%%%%%%%%%%%%%%%%%%%%%%%%%%%%%%%%%%%

\section{Graphs of diameter 2}
\label{sec:diam2}

In this section, we make small changes to the construction in \cref{sec:main-proof} and prove \cref{thm:diam2}.

\thmDiamTwo*

Let $\mathcal{I} = \langle a_{1}, \dots, a_{3m}; B \rangle$ be an instance of \textsc{3-Partition} with $m \ge 4$,
$B/4 < a_{1} \le \dots \le a_{3m} < B/2$, and $a_{i} \ge 8m$ for every $i \in [3m]$.
Let $\langle G = (V,E), k \rangle$ be the instance of \STC{} constructed from $\mathcal{I}$ in \cref{sec:main-proof}.
Recall that $k = 3B$ and $V = X \cup Y \cup Z$, where
$X = \{x_{i} \mid i \in [3m]\}$,
$Y = \{y_{i} \mid i \in [a_{3m}]\}$, and
$Z = \{z_{i} \mid i \in [k-a_{3m}+1]\}$.

From $\langle G, k \rangle$, we construct a new instance $\langle H, k' \rangle$ of \STC{} as follows.
Set $k' = k + 3$.
To construct $H$ from $G$,
add a vertex $p$ and a set $D$ of $a_{3m} -3m +2$ vertices,
and then
make $p$ adjacent to all vertices in $D \cup X \cup Z$
and make $z_{1}$ adjacent to all vertices in $D$.
See \cref{fig:reduction2}.

\begin{figure}[tb]
  \centering
  \includegraphics{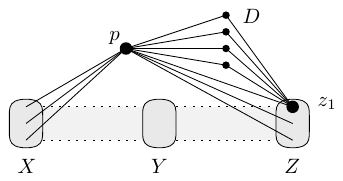}
  \caption{The graph $H$ constructed from $G$ with additional vertices $\{p\} \cup D$.}
  \label{fig:reduction2}
\end{figure}

The degrees of vertices in $H$ compare to those in $G$ as follows:
$\deg_{H}(x) = \deg_{G}(x) + 1$ for $x \in X$;
$\deg_{H}(y) = \deg_{G}(y)$ for $y \in Y$;
$\deg_{H}(z_{1}) = \deg_{G}(z_{1}) + |D| + 1$;
$\deg_{H}(z) = \deg_{G}(z) + 1$ for $z \in Z \setminus \{z_{1}\}$.
For new vertices, it follows that 
$\deg_{H}(d) = 2$ for $d \in D$
and
\[
\deg_{H}(p) = |X| + |Z| + |D| = 3m + (k-a_{3m}+1) + (a_{3m}-3m+2) = k+3 = k'. 
\]

\begin{lemma}
\label{lem:diam2}
$H$ has diameter~$2$.
\end{lemma}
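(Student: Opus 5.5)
We verify directly that every pair of distinct vertices of $H$ is at distance at most~$2$, and exhibit one nonadjacent pair so that the diameter is exactly~$2$. We partition $V(H) = X \cup Y \cup Z \cup D \cup \{p\}$ and treat pairs by the parts they come from. The hubs are the vertices $p$ (adjacent to $D \cup X \cup Z$), $y_{1}$ (adjacent to all of $X$ and all of $Y$, since $\gamma_{1} = 3m$), and $z_{1}$ (adjacent to all of $Y$, all of $Z$, all of $D$, and to $p$).

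\textbf{Pairs within one part.} $X$, $Y$, $Z$ are cliques, so such pairs are adjacent. Two vertices of $D$ share the neighbor $p$.

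\textbf{Pairs involving $p$.} $p$ is adjacent to $D \cup X \cup Z$. For $y \in Y$, the path $(p, z_{1}, y)$ has length~$2$.

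\textbf{Pairs across parts, one hub each.} $X$--$Y$ pairs use $y_{1}$. $Y$--$Z$ pairs use $z_{1}$. $X$--$Z$ pairs, which had distance~$3$ in $G$, now use $p$, and this is the point of adding $p$. $X$--$D$ pairs also use $p$. $Y$--$D$ pairs use $z_{1}$, since $z_{1}$ is adjacent to every $y \in Y$ and every $d \in D$. $Z$--$D$ pairs use $p$ (or $z_{1}$).

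This exhausts all pairs, so $\dist_{H}(u,v) \le 2$ for all $u,v$. For exactness, take $x \in X$ and $z \in Z$: there is no edge between $X$ and $Z$, and $H$ adds none, so the diameter is exactly~$2$.

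The only step needing care is the $Y$--$D$ case, where $p$ is not adjacent to $Y$. I rely on $z_{1}$ being adjacent to all of $Y$; this holds because $j=1$ lies in both neighborhood ranges used in the construction, as shown in \cref{lem:G-pint}.
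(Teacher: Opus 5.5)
Your proof is correct and takes essentially the paper's route: a pair-by-pair check over $X$, $Y$, $Z$, $D$, $\{p\}$ using the common neighbors $p$, $y_{1}$, $z_{1}$ (the paper shortens the pairs inside $X \cup Y \cup Z$ by noting that only $X$--$Z$ pairs were at distance $3$ in $G$). The only difference is cosmetic: for exactness you exhibit a nonadjacent $X$--$Z$ pair, while the paper uses $p$ and a vertex of $Y$, and both choices work.
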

\begin{proof}
In the original graph $G$, two vertices $u,v$ satisfy $\dist_{G}(u,v) > 2$ only if $u \in X$ and $v \in Z$, or $u \in Z$ and $v \in X$.
In $H$, such a pair of vertices has $p$ as a common neighbor. This implies that $\dist_{H}(u,v) \le 2$ for $u, v \in X \cup Y \cup Z$.
The vertex $p$ is at distance at most~$2$ from every vertex of $Y$ through $z_{1}$ as $z_{1}$ is adjacent to every vertex of $Y$.
Each vertex $d \in D$ is adjacent to $p$, is at distance at most~$2$ from every vertex of $X$ through $p$, and is at distance at most~$2$ from every vertex of $Y \cup Z$ through $z_{1}$. 
Any two vertices of $D$ have the common neighbor $p$. 
Thus every two vertices of $H$ are at distance at most~$2$. 
Finally, $p$ is not adjacent to any vertex of $Y$, so the diameter is exactly~$2$.
\end{proof}

In the following, we show that $\mathcal{I}$ is a yes-instance of \textsc{3-Partition} if and only if $\langle H, k' \rangle$ is a yes-instance of \STC.

\begin{lemma}
If $\mathcal{I}$ is a yes-instance of \textsc{3-Partition}, then $\langle H, k' \rangle$ is a yes-instance of \STC.
\end{lemma}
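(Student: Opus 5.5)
The plan is to extend the spanning tree built in the proof of \cref{lem:3partition->stc} by attaching the new vertices $p$ and $D$ as leaves of $z_{1}$. Then I will check that every edge has congestion at most $k' = k+3$, using the degree comparisons between $H$ and $G$ listed before \cref{lem:diam2}.

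Let $(A_{1},\dots,A_{m})$ be a solution of $\mathcal{I}$, so $|A_{i}|=3$ and $\sum_{j\in A_{i}} a_{j}=B$. Define the spanning tree $T_{H}$ of $H$ by
\[
  E(T_{H}) = \{\{z_{1}, w\} \mid w \in Y \cup (Z\setminus\{z_{1}\}) \cup D \cup \{p\}\} \cup \bigcup_{i\in[m]} \{\{y_{i},x_{j}\} \mid j \in A_{i}\}.
\]
This is legal because $z_{1}$ is adjacent in $H$ to all of $Y$, $Z\setminus\{z_{1}\}$, $D$ and $p$. The only inner edges are again $\{z_{1},y_{i}\}$ for $i\in[m]$; every other edge is a leaf edge.

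The congestion of a leaf edge equals the degree of its leaf, so for those edges I only need the leaf degrees.
\begin{itemize}
  \item For $x\in X$: $\deg_{H}(x)=a_{j}+3m < k'$.
  \item For $y\in Y$: $\deg_{H}(y)=\deg_{G}(y)\le k$.
  \item For $z\in Z\setminus\{z_{1}\}$: $\deg_{H}(z)\le k+1$.
  \item For $d\in D$: $\deg_{H}(d)=2$.
  \item For $p$: $\deg_{H}(p)=k+3=k'$, which is exactly tight.
\end{itemize}

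For an inner edge $\{z_{1},y_{i}\}$, the cut side is the $4$-clique $\{y_{i}\}\cup\{x_{j}\mid j\in A_{i}\}$. Its congestion is the sum of the $H$-degrees of these four vertices minus $12$. Compared with the computation in \cref{lem:3partition->stc}, $y_{i}$ gains no edges and each of the three $x_{j}$ gains exactly one edge (to $p$). So the congestion is $k+3=k'$.

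There is no real obstacle. The only care points are the two places where the bound is met with equality, namely the edge $\{z_{1},p\}$ and the edges $\{z_{1},y_{i}\}$. The latter relies on $p$ being adjacent to every vertex of $X$ but to no vertex of $Y$.
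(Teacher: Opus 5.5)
Your proposal is correct and follows the paper's proof. It uses the same tree: the tree from \cref{lem:3partition->stc} plus leaf edges from $z_{1}$ to $p$ and to every vertex of $D$. The inner edges $\{z_{1},y_{i}\}$ pick up exactly $+3$ from the edges between $p$ and the three vertices of $X$. The only cosmetic difference is that you check each leaf's degree separately, whereas the paper just notes that $z_{1}$ is the only vertex of degree exceeding $k'$ and is not a leaf.
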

\begin{proof}
Let $(A_{1}, \dots, A_{m})$ be a partition of $[3m]$ with
$|A_{i}| = 3$ and $\sum_{j \in A_{i}} a_{j} = B$ for all $i \in [m]$.
We construct a spanning tree $T'$ of $H$
by first taking the spanning tree $T$ of $G$ used in the proof of \cref{lem:3partition->stc} 
and then adding the leaf edges $\{z_{1}, p\}$ and $\{z_{1}, d\}$ for all $d \in D$.

Since $z_{1}$ is the unique vertex with degree more than $k'$ in $H$ and $z_{1}$ is not a leaf of $T'$, 
all leaf edges of $T'$ have congestion at most $k'$.

Observe that each non-leaf edge of $T'$ has the form $\{z_{1}, y_{i}\}$ with $i \in [m]$.
For such an edge, the component not containing $z_{1}$ is $\{y_{i}\} \cup \{x_{j} \mid j \in A_{i}\}$.
It follows that 
\[
  \cng_{H,T'}(\{z_{1},y_{i}\}) = \cng_{G,T}(\{z_{1}, y_{i}\}) + 3 = k+3 = k',
\]
where the increase $+3$ comes from the edges between $p$ and the three vertices in $\{x_{j} \mid j \in A_{i}\}$.
\end{proof}

\begin{lemma}
If $\langle H, k' \rangle$ is a yes-instance of \STC, then $\mathcal{I}$ is a yes-instance of \textsc{3-Partition}.
\end{lemma}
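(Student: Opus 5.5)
The plan is to rerun the proof of \cref{lem:stc->3partition} inside $H$ with budget $k' = k+3$, using \cref{lem:spider} with $\rho_{1} = (k+1)/(k+3)$ and $\rho_{2} = (k+3)/(2(k+1))$, so that $\rho_{1}\rho_{2} = 1/2$ and $\rho_{2} \ge 1/2$. Fix a spanning tree $T$ of $H$ with congestion at most $k'$. I take $S = Y \cup Z \cup \{p\}$. Then $|S| = k+2 = \rho_{1} k' + 1$.

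For the degree condition I would check $\delta(H[S]) \ge \rho_{2}(k+2) + 1$, whose right-hand side is roughly $k/2 + 2$. The vertex $p$ has $|Z| = k - a_{3m} + 1 > 5k/6$ neighbours in $S$. Every other vertex has at least $k - B - 12m + 15$ neighbours in $S$, which is about $2k/3$, using $B \ge 24m$.

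\cref{lem:spider} then gives a spider $T'$ whose degree-$2$ vertices lie outside $S$ and whose branching vertex $r$ has $\deg_{T'}(r) \ge k+1$.

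\textbf{Shape of $T'$.} I would show $r \in Z$, following \cref{clm:star}.
\begin{itemize}
\item $r \ne p$, since $p$ has only $|Z| < k+1$ neighbours in $S$.
\item $r \notin Y$: any $y \in Y$ has at most $k$ neighbours in $S$, because $\deg_{H}(y) = \deg_{G}(y) \le k$.
\end{itemize}
Next I would show $V(T') = S$, so $T'$ is a star centred at some $z^{*} \in Z$. Suppose $T'$ contained a vertex outside $S$. Such a vertex is in $X$ or in $D$, and it has degree $2$ in $T'$. An $r$--$x$ path with $x \in X$ must pass through $Y \cup \{p\}$, and the vertex used there would have degree $2$ in $T'$ while lying in $S$, a contradiction. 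A vertex $d \in D$ adjacent to $z^{*} = z_{1}$ would lie strictly inside an $r$--leaf path; its other neighbour on the path must be $p$, so $p$ is a leaf of $T'$ and the star edge $z^{*}p$ is not in $T$. This path is then a $4$-cycle $z_{1}, d, p$ together with the fact that $z_{1}p \in E(H)$ is unused. I would simply bound the congestion of the edge $\{z_{1}, d\}$ directly: its side containing $p$ already meets at least $|Z| - 1 + |D| - 1 + 3m$ cut edges, which exceeds $k+3$ after adding the edges of $Y$ to $z$'s. I expect this to be a routine case.

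\textbf{Where $X$ can attach.} Rooting $T$ at $z^{*}$, the vertices of $Z \setminus \{z^{*}\}$ can have descendants only in $D$, since their neighbourhoods outside $Y \cup Z \cup \{p\}$ lie in $D$. The vertices of $D$ can only be leaves. Hence each $x \in X$ is a descendant of some $y_{i}$ or of $p$.

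The step I expect to be the main obstacle is excluding $X$-vertices below $p$. For the edge joining $p$ to its parent, let $W = \{p\} \cup X_{p} \cup D_{p}$ be the subtree side. I would count the cut edges in three groups.
\begin{itemize}
\item Edges from $p$: all of $Z$, plus $D \setminus D_{p}$, plus $X \setminus X_{p}$.
\item Edges from each $d \in D_{p}$ to $z_{1}$.
\item Edges from each $x_{j} \in X_{p}$: its $a_{j}$ edges into $Y$, plus its edges to $X \setminus X_{p}$.
\end{itemize}
Summing gives at least $k+3 - |X_{p}| + \sum_{x_{j} \in X_{p}} a_{j} > k+3$ whenever $X_{p} \neq \emptyset$, since each $a_{j} \ge 8m > 1$.

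\textbf{Partition and sums.} Writing $X_{i}$ for the descendants of $y_{i}$, the $X_{i}$ partition $X$. The congestion of $\{z^{*}, y_{i}\}$ is at least the bound used in \cref{clm:y-leaves,clm:correct-sum}, plus $|X_{i}|$ for the edges from $X_{i}$ to $p$.

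For $i > m$ with $X_{i} \ne \emptyset$, the excess over $k$ is at least $a_{j} - 2 + 1 > 3$ for any $x_{j} \in X_{i}$. So $y_{i}$ is a leaf.

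For $i \le m$, suppose $\sum_{x_{j} \in X_{i}} a_{j} > B$. The bound becomes
\[
(k - B - 9m + 15) + \sum_{x_{j} \in X_{i}} a_{j} + |X_{i}|\bigl((3m-1) - |X_{i}|\bigr).
\]
If $\sum_{x_{j} \in X_{i}} a_{j} \le 2B$, then $3 \le |X_{i}| \le 7$, so the last term is at least $9m - 12$ and the total exceeds $k+3$. Otherwise the last term is at least $-3m$, and $2B - 3m \ge B + 9m - 12$ also gives a total exceeding $k+3$.

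Therefore every $\sum_{x_{j} \in X_{i}} a_{j} \le B$. Since the total is $mB$, each sum equals $B$, which yields the required 3-partition.
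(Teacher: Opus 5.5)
Your ``Shape of $T'$'' step has two genuine problems.

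First, excluding $r=p$ because $p$ has only $|Z|<k+1$ neighbours in $S$ is not valid. \cref{lem:spider} bounds $\deg_{T'}(r)$, and $T'$ may contain vertices of $X\cup D$ as degree-$2$ vertices, all of which are adjacent to $p$. In fact $\deg_{H}(p)=|X|+|Z|+|D|=k+3\ge k+1$, so no degree count rules $p$ out. Your later exclusion of $X$-vertices from $T'$ assumes $r\in Z$, so it cannot be used here either. The missing idea is a pigeonhole argument. If $r=p$, every $y\in Y$ is a leaf of $T'$ with $\{p,y\}\notin E(H)$, so the leg ending at $y$ has an internal vertex adjacent to $y$. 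That vertex lies outside $S$, hence in $X$. Since legs are disjoint, this gives an injection $Y\to X$, contradicting $|Y|=a_{3m}\ge 8m>3m=|X|$.

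Second, the claim $V(T')=S$ is false, and your congestion bound for the $D$-case is miscounted. Suppose $z^{*}=z_{1}$ and the $z^{*}$--$p$ path is $(z_{1},d,p)$. Then $p$ is a leaf of $T'$, so the side of $\{z_{1},d\}$ containing $p$ has no vertex of $Y\cup Z$, and there are no $Y$--$Z$ edges to add. If no vertex of $X$ hangs below $p$, this cut is exactly $|Z|+|D|+|X|=k+3=k'$. Concretely, take the spanning tree of $H$ built in the converse direction and replace $\{z_{1},p\}$ by $\{d,p\}$ for one $d\in D$. Every congestion stays at most $k'$, yet $d\in V(T')$ and $d$ is not a leaf, so ``the vertices of $D$ can only be leaves'' also fails.

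What you can, and need to, prove is weaker. Every vertex of $Y\cup(Z\setminus\{z^{*}\})$ is a $T$-neighbour of $z^{*}$, since on a longer path the vertex after $z^{*}$ would be some $d\in D$ and the next one $p$. The $z^{*}$--$p$ path is either the edge $\{z^{*},p\}$ or $(z_{1},d,p)$. This already implies that no vertex of $Y$ is an ancestor of $p$ or of any vertex of $D$. Hence the subtree of $y_{i}$ is $\{y_{i}\}\cup X_{i}$, and the $|X_{i}|$ edges to $p$ are cut edges. With these two repairs, your count for the edge above $p$ (which does not depend on whether the parent of $p$ is $z^{*}$ or $d$) and your final arithmetic are correct and match the paper's argument.
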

\begin{proof}
Let $T$ be a spanning tree of $H$ with congestion at most~$k'$
and let $T'$ be the inclusion-wise minimal subtree of $T$ that contains all vertices of $\{p\} \cup Y \cup Z$.
Let
$\rho_{1} = (k+1)/(k+3)$,
$\rho_{2} = (k+3)/(2(k+1))$, and
$S = \{p\} \cup Y \cup Z$.
It follows that $|S| = k+2 = \rho_{1} k' + 1 \ge 4$, $\rho_{1} \rho_{2} = 1/2$, and $\rho_{2} \ge 1/2$.
We can see that $\deg_{H[S]}(p) = \min_{z \in Z} \deg_{H[S]}(z) = |Z| = k-a_{3m}+1$
and $\min_{y \in Y}\deg_{H[S]}(y) = k-B-12m+15$.
It follows that $\delta(H[S]) = k-B-12m+15 \ge k/2 + 15$.
On the other hand, we have $\rho_{2} |S| + 1 = (k+2)(k+3)/(2(k+1)) + 1 = k/2 + 1/(k+1) + 3$.
Hence, $\delta(H[S]) \ge \rho_{2} |S| + 1$ holds.
Now, \cref{lem:spider} implies that $T'$ is a spider whose branching vertex has degree at least $|S|-1=k+1$.

\begin{claim}
\label{clm:spider-diam2}
The branching vertex of $T'$ belongs to $Z$.
\end{claim}
\begin{subproof}[Proof of \cref{clm:spider-diam2}]
Let $r$ be the branching vertex of $T'$.
Since $\deg_{T'}(r) \ge k+1$, we have $r \in \{p\} \cup Z$.
Suppose to the contrary that $r = p$.
For $y \in Y$, let $P_{y}$ be the $p$--$y$ path in $T'$.
Since $\{p,y\} \notin E(H)$ and $P_{y}$ cannot have vertices of $S$ as internal vertices,
$P_{y}$ has to contain a vertex of $X$.
Since $T'$ is a spider, each $P_{y}$ contains a distinct member of $X$.
This contradicts that $|Y| = a_{3m} \ge 8m > 3m = |X|$.
\end{subproof}

In the following, let $z^{*} \in Z$ be the branching vertex of $T'$
and consider $z^{*}$ as the root of $T$.

\begin{claim}
\label{clm:almost-star-diam2}
Every vertex of $Y \cup (Z \setminus \{z^{*}\})$ is adjacent to $z^{*}$ in $T$.
\end{claim}
\begin{subproof}[Proof of \cref{clm:almost-star-diam2}]
Let $s \in Y \cup (Z \setminus \{z^{*}\})$.
If the $z^{*}$--$s$ path $P$ in $T'$ has length at least~$2$, then the neighbor of $z^{*}$ in $P$ has to lie outside $S$.
Only the vertices in $D$ can be such a neighbor, but then the next vertex in $P$ has to be $p$, a contradiction.
\end{subproof}

\begin{claim}
\label{clm:no-p-dec-diam2}
In $T$, no vertex of $X$ is a descendant of $p$.
\end{claim}
\begin{subproof}[Proof of \cref{clm:no-p-dec-diam2}]
Let $A$ be the vertex set of the subtree of $T$ rooted at $p$, and let $e$ be the edge of $T$ connecting $p$ and its parent.
\cref{clm:almost-star-diam2} implies that $A \subseteq \{p\} \cup X \cup D$.
We can show that $\cng_{H,T}(e) \ge k' + |A \cap X|$ as follows:
\begin{align*}
  \cng_{H,T}(e)
  &=
  \sum_{v \in A} \deg_{H}(v) - 2|E(H[A])|
  \\
  &=
  k'
  + 2 |A \cap D|
  + \sum_{x_{i} \in A \cap X} (a_{i} + 3m)
  - 2(|A \cap D| + |A \cap X| + |A \cap X| (|A \cap X|-1)/2)
  \\
  &\ge
  k'
  + \sum_{x_{i} \in A \cap X} (11m)
  - |A \cap X|(|A \cap X|+1)
  =
  k'
  + |A \cap X| (11m - |A \cap X|-1)
  \\
  &\ge
  k' + |A \cap X|,
\end{align*}
where the inequalities use $a_{i} \ge 8m$ and $|A \cap X| \le 3m$.
Since $\cng_{H,T}(e) \le k'$, it follows that $|A \cap X| = 0$.
\end{subproof}

\begin{claim}
\label{clm:pD-diam2}
No vertex of $\{p\} \cup D$ is a descendant of a vertex of $Y$ in $T$.
\end{claim}
\begin{subproof}[Proof of \cref{clm:pD-diam2}]
We first show that no vertex of $Y$ lies on the $z^{*}$--$p$ path in $T$.
Since $T'$ is a subtree of $T$ that contains both $z^{*}$ and $p$, this path lies in $T'$,
and thus all its internal vertices have degree~$2$ in $T'$
and belong to $V(H) \setminus S = X \cup D$ by \cref{lem:spider}.
Assume that the path has at least one internal vertex.
Since there is no edge between $X$ and $Z$, the internal vertex adjacent to $z^{*}$ belongs to $D$,
and thus $z^{*} = z_{1}$ as $z_{1}$ is the only vertex of $Z$ with neighbors in $D$.
Now, since $N_{H}(d) = \{p, z_{1}\}$ for each $d \in D$, the vertex following this internal vertex is $p$.
Hence, the $z^{*}$--$p$ path in $T$ is either the single edge $\{z^{*}, p\}$
or a path $(z_{1}, d, p)$ with $d \in D$.
In particular, no vertex of $Y$ lies on this path,
and thus $p$ is not a descendant of a vertex of $Y$.

Next, let $d \in D$.
Since $N_{H}(d) = \{p, z_{1}\}$, the parent of $d$ in $T$ is $p$ or $z_{1}$.
If the parent is $p$, then the statement for $d$ follows from the one for $p$ shown above.
If the parent is $z_{1}$, then $z_{1}$ is $z^{*}$ itself or a child of $z^{*}$ by \cref{clm:almost-star-diam2},
and thus no vertex of $Y$ is an ancestor of $d$.
\end{subproof}

Having handled the vertices in $\{p\} \cup D$, the rest of the proof is almost the same as that of \cref{lem:stc->3partition}.
For completeness, we present the remaining steps in full.

By~\cref{clm:no-p-dec-diam2}, the vertices of $X$ are partitioned among the subtrees rooted at vertices of $Y$.
For $i \in [a_{3m}]$, let $X_{i} \subseteq X$ be the set of descendants of $y_{i}$ in $T$.
By \cref{clm:almost-star-diam2,clm:pD-diam2}, every descendant of $y_{i}$ belongs to $X$.
In particular, if $y_{i}$ is not a leaf of $T$,
then the component of $T - \{z^{*}, y_{i}\}$ not containing $z^{*}$ is exactly $\{y_{i}\} \cup X_{i}$.

\begin{claim}
\label{clm:y-leaves-diam2}
For $i \in [m+1, a_{3m}]$, $y_{i}$ is a leaf of $T$.
\end{claim}
\begin{subproof}[Proof of \cref{clm:y-leaves-diam2}]
Suppose that $y_{i}$ is not a leaf of $T$ (i.e., $X_{i} \ne \emptyset$) for some $i \in [m+1, a_{3m}]$.
This gives the contradiction that $\cng_{H,T}(\{z^{*}, y_{i}\}) > k'$ ($= k+3$) as follows:
\begin{align*}
  \cng_{H,T}(\{z^{*}, y_{i}\})
  &\ge
  \deg_{H}(y_{i}) + \sum_{x_{j} \in X_{i}} \deg_{H}(x_{j}) - 2 \cdot \binom{|X_{i}|+1}{2}
  \\
  &=
  k + \sum_{x_{j} \in X_{i}} ((a_{j} + 3m) - (|X_{i}|+1))
  \\
  &>
  k+3,
\end{align*}
where the last inequality holds
because $X_{i} \ne \emptyset$ and 
$(a_{j} + 3m) - (|X_{i}|+1) \ge 11m - (3m+1) > 3$ for every $j \in [3m]$.
\end{subproof}

\cref{clm:y-leaves-diam2} implies that $X_{1}, \dots, X_{m}$ is a partition of $X$.
It remains to show that this partition gives a solution to \textsc{3-Partition}.
As in the proof of \cref{lem:stc->3partition}, it suffices to show that $\sum_{x_{j} \in X_{i}} a_{j} \le B$ for every $i \in [m]$.
Suppose to the contrary that $\sum_{x_{j} \in X_{i}} a_{j} > B$ for some $i \in [m]$.
It follows that $|X_{i}| \ge 3$ as $a_{j} < B/2$ for every $j \in [3m]$.
Now it follows that $\cng_{H,T}(\{z^{*},y_{i}\}) > k'$ ($= k+3$) as follows:
\begin{align*}
  \cng_{H,T}(\{z^{*},y_{i}\})
  &\ge
  \deg_{H}(y_{i}) + \sum_{x_{j} \in X_{i}} \deg_{H}(x_{j}) -2 \cdot \binom{|X_{i}|+1}{2}
  \\
  &=
  (k-B-9m+15) + \sum_{x_{j} \in X_{i}} (a_{j} + 3m) - |X_{i}| (|X_{i}|+1)
  \\
  &=
  (k-B-9m+15) + \sum_{x_{j} \in X_{i}} a_{j} + |X_{i}| ((3m-2)-|X_{i}|) + |X_{i}|
  \\
  &>
  k + |X_{i}|
  \\
  &\ge
  k + 3,
\end{align*}
where the second-to-last inequality follows from the argument in the proof of \cref{clm:correct-sum}.
\end{proof}

%%%%%%%%%%%%%%%%%%%%%%%%%%%%%%%%%%%%%%%%%%%%%%%%%%%%%%%%%%%%%%%%%%%%%%%%%%%%%%%%%%%%%%%%%%%%%%%%%%%%%%%%%%%%%%%%%%%%%%%%%%%%%%%%
%%%%%%%%%%%%%%%%%%%%%%%%%%%%%%%%%%%%%%%%%%%%%%%%%%%%%%%%%%%%%%%%%%%%%%%%%%%%%%%%%%%%%%%%%%%%%%%%%%%%%%%%%%%%%%%%%%%%%%%%%%%%%%%%
%%%%%%%%%%%%%%%%%%%%%%%%%%%%%%%%%%%%%%%%%%%%%%%%%%%%%%%%%%%%%%%%%%%%%%%%%%%%%%%%%%%%%%%%%%%%%%%%%%%%%%%%%%%%%%%%%%%%%%%%%%%%%%%%

\section{Proper interval graphs of diameter 2}
\label{sec:cochain}
In this section, we show that proper interval graphs of diameter at most~$2$ are cochain graphs.
A bipartite graph is a \emph{chain graph} if it does not contain $2K_{2}$ (the graph formed by two independent edges) as an induced subgraph.
A graph is a \emph{cochain graph} if its complement is a chain graph.
The $\{3K_{1}, C_{4}, C_{5}\}$-free characterization of cochain graphs by Heggernes and Kratsch~\cite{HeggernesK07} implies that
a graph is a cochain graph if and only if it is a cobipartite chordal graph~(see also \cite[Lemma~3.1]{KonagayaOU16}).

\begin{lemma}
\label{lem:pint-diam2->cochain}
If a proper interval graph $G$ has diameter at most~$2$, then $G$ is a cochain graph.
\end{lemma}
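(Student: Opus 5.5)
We will show that $G$ is cobipartite and chordal; by the characterization recalled above (Heggernes and Kratsch), this means $G$ is a cochain graph. Chordality is immediate, since every interval graph is chordal. The work is in showing that $V(G)$ can be covered by two cliques. If $G$ is disconnected, then diameter at most $2$ forces $G$ to have a single vertex, or we treat it as a degenerate case. So assume $G$ is connected, and fix a proper interval ordering $v_1 \prec v_2 \prec \dots \prec v_n$ as in the proof of \cref{lem:G-pint}.

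The key structural fact we will use is the following. In a proper interval ordering, for each $i$ the closed neighbourhood $N_G[v_i]$ is a contiguous block $\{v_\ell, \dots, v_r\}$ containing $v_i$. Moreover, the set $\{v_i, \dots, v_r\}$ of later neighbours together with $v_i$ is a clique, and likewise the earlier neighbours together with $v_i$ form a clique. Both claims follow directly from the defining property: $u \prec v \prec w$ and $\{u,w\} \in E$ imply $\{u,v\},\{v,w\} \in E$.

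Now use the diameter. Since $\dist(v_1, v_n) \le 2$, either $v_1 v_n \in E$, or there is a common neighbour $v_c$. In the first case the property above makes $V$ a single clique, and we are done. In the second case, $v_1 v_c \in E$ implies that $\{v_1, \dots, v_c\}$ is a clique: apply the property to any pair $v_a \prec v_b$ in this range, using $v_1 \prec v_a \prec v_b \preceq v_c$ and the edge $v_1v_c$. Symmetrically, $v_c v_n \in E$ implies that $\{v_c, \dots, v_n\}$ is a clique. Thus $V$ is covered by two cliques, so $G$ is cobipartite.

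The main obstacle is only the small verification that an edge $v_a v_b$ in a proper interval ordering makes the whole interval $\{v_a, \dots, v_b\}$ a clique. For $a \le s < t \le b$, apply the property first to the triple $v_a \prec v_s \prec v_b$ to get $v_s v_b \in E$. Then apply it to $v_s \prec v_t \prec v_b$ to get $v_s v_t \in E$; the degenerate cases $s=a$ or $t=b$ are handled trivially.
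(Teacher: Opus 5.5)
Your proof is correct. It follows the paper's overall strategy: the two extreme vertices together with the distance-$2$ condition produce two cliques covering $V(G)$, and then chordal plus cobipartite gives cochain. The mechanics differ, though.

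The paper passes to a unit interval representation with pairwise distinct endpoints. It then argues by contradiction that the left endpoints lie in a window of length at most~$2$, since otherwise $v_{\min}$ and $v_{\max}$ would have no common neighbour. Finally it cuts the line at the fixed threshold $\ell_{\min}+1$.

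You instead stay with the proper interval ordering already used in the proof of \cref{lem:G-pint}. You cut directly at a common neighbour $v_c$ of $v_1$ and $v_n$, and your only tool is the segment fact that an edge $v_av_b$ makes $\{v_a,\dots,v_b\}$ a clique; your verification of that fact is fine.

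Your route is direct rather than by contradiction. It also avoids the extra representation theorem and the general-position assumption on endpoints. The paper's route gives a split that does not depend on choosing $v_c$, and it is perhaps easier to picture geometrically.

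One small remark: your sentence about disconnected graphs is muddled. A disconnected graph has infinite diameter, so the hypothesis already excludes it, and a one-vertex graph is connected. This is harmless, because your argument only uses $\dist_G(v_1,v_n)\le 2$ and never needs connectivity separately.
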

\begin{proof}
Fix an interval representation of $G$ where all intervals have length~$1$ and all $2|V(G)|$ endpoints of the intervals are pairwise distinct.
It is known that every proper interval graph admits such a \emph{unit} interval representation~(see \cite{BogartW99}).
Let $I_{\min}$ and $I_{\max}$ be the intervals with 
the minimum left-endpoint $\ell_{\min}$ and the maximum left-endpoint $\ell_{\max}$, respectively,
and $v_{\min}$ and $v_{\max}$ be the corresponding vertices.

We first show that $\ell_{\max} \le \ell_{\min} + 2$.
Suppose otherwise; that is, $\ell_{\max} > \ell_{\min} + 2$.
Then, $v_{\min}$ and $v_{\max}$ are non-adjacent as $I_{\min}$ and $I_{\max}$ are disjoint.
Furthermore, $v_{\min}$ and $v_{\max}$ have no common neighbor:
every interval $I$ intersecting both $I_{\min}$ and $I_{\max}$ has left-endpoint $\ell$ 
with $\ell_{\min} < \ell < \ell_{\min}+1$ and $\ell < \ell_{\max} < \ell+1$;
it follows that $\ell < \ell_{\min}+1 < \ell_{\max}-1 < \ell$, a contradiction.
This implies that the distance between $v_{\min}$ and $v_{\max}$ is more than~$2$ in $G$, contradicting the diameter bound of $G$.

Now we partition the vertices of $G$ into two sets $L$ and $R$:
$L$ is the set of vertices corresponding to the intervals with left-endpoints $\ell$ with $\ell_{\min} \le \ell \le \ell_{\min}+1$;
$R$ is the set of the remaining vertices, which correspond to the intervals with left-endpoints $\ell$ with $\ell_{\min}+1 < \ell \le \ell_{\max} \le \ell_{\min}+2$.
Both $L$ and $R$ form cliques, and hence, $G$ is cobipartite.
Since a proper interval graph is chordal, $G$ is a cochain graph.
\end{proof}

Note that the converse of \cref{lem:pint-diam2->cochain} does not hold in general:
$P_{4}$ is a cochain graph (and thus a proper interval graph) with diameter~$3$.

%%%%%%%%%%%%%%%%%%%%%%%%%%%%%%%%%%%%%%%%%%%%%%%%%%%%%%%%%%%%%%%%%%%%%%%%%%%%%%%%%%%%%%%%%%%%%%%%%%%%%%%%%%%%%%%%%%%%%%%%%%%%%%%%
%%%%%%%%%%%%%%%%%%%%%%%%%%%%%%%%%%%%%%%%%%%%%%%%%%%%%%%%%%%%%%%%%%%%%%%%%%%%%%%%%%%%%%%%%%%%%%%%%%%%%%%%%%%%%%%%%%%%%%%%%%%%%%%%
%%%%%%%%%%%%%%%%%%%%%%%%%%%%%%%%%%%%%%%%%%%%%%%%%%%%%%%%%%%%%%%%%%%%%%%%%%%%%%%%%%%%%%%%%%%%%%%%%%%%%%%%%%%%%%%%%%%%%%%%%%%%%%%%

\section{Concluding remarks}

In this paper, we settled an open case of {\STCfull} by showing that the problem is NP-complete even on proper interval graphs of linear clique-width at most~$4$.
As a byproduct, we also showed that the problem is NP-complete on graphs of diameter~$2$.
These results leave several important questions open.

\paragraph{Graph classes}
From the viewpoint of graph classes, one of the most relevant remaining classes is that of cographs, i.e., graphs of clique-width at most~$2$ (see \cref{fig:graph-classes}).
Note that {\STC} is known to be NP-complete on graphs of clique-width at most~$3$~\cite{OkamotoOUU11}.
Note also that a connected cograph has diameter at most~$2$.

\paragraph{Fixed $k$}
Let $k$ be the target congestion in {\STC}.
If $k \le 3$, then {\STC} is polynomial-time solvable~\cite{BodlaenderFGOL12}.
On the other hand, for every fixed $k \ge 5$, {\STC} is NP-complete~\cite{LuuC25}.
The case of $k = 4$ remains unsettled.

\paragraph{Approximation}
Recently, Kolman~\cite{Kolman25} gave an $O(\Delta \cdot \log^{3/2} n)$-approximation algorithm, where $\Delta$ is the maximum degree.
On the other hand, the NP-hardness of {\STC} with $k = 5$ implies that an approximation factor strictly better than $1.2$ is impossible unless $\mathrm{P} = \mathrm{NP}$.
It would be an interesting challenge to narrow the gap between these upper and lower bounds.

\paragraph{Structural parameterizations}
In a recent paper, Lampis et al.~\cite{LampisMNOVV25} resolved the complexity of {\STC} for many structural parameters, while several cases remain open.
For example, they mention parameterization by neighborhood diversity as an open case.

%%%%%%%%%%%%%%%%%%%%%%%%%%%%%%%%%%%%%%%%%%%%%%%%%%%%%%%%%%%%%%%%%%%%%%%%%%%%%%%%%%%%%%%%%%%%%%%%%%%%%%%%%%%%%%%%%%%%%%%%%%%%%%%%
%%%%%%%%%%%%%%%%%%%%%%%%%%%%%%%%%%%%%%%%%%%%%%%%%%%%%%%%%%%%%%%%%%%%%%%%%%%%%%%%%%%%%%%%%%%%%%%%%%%%%%%%%%%%%%%%%%%%%%%%%%%%%%%%
%%%%%%%%%%%%%%%%%%%%%%%%%%%%%%%%%%%%%%%%%%%%%%%%%%%%%%%%%%%%%%%%%%%%%%%%%%%%%%%%%%%%%%%%%%%%%%%%%%%%%%%%%%%%%%%%%%%%%%%%%%%%%%%%

\bibliographystyle{plainurl}
\bibliography{ref}

\end{document}